\newcommand\myeq{\mathrel{\stackrel{\makebox[0pt]{\mbox{\normalfont\tiny def}}}{=}}}
\newcommand{\N}{\mathbb{N}}
\newcommand{\R}{\mathbb{R}}
\newcommand{\C}{\mathbb{C}}
\def\U{U_0}
\def\tilde{\widetilde}
\renewcommand{\Re}{\operatorname{Re}}
\renewcommand{\Im}{\operatorname{Im}}
\def\bt{\begin{thm}}
\def\et{\end{thm}}
\def\bl{\begin{lem}}
\def\el{\end{lem}}
\def\bd{\begin{defi}}
\def\ed{\end{defi}}
\def\bc{\begin{cor}}
\def\ec{\end{cor}}
\def\bp{\begin{proof}}
\def\ep{\end{proof}}
\def\br{\begin{rem}}
\def\er{\end{rem}}
\newcommand{\inner}[2]{\langle #1, #2 \rangle}
\newtheorem{thm}{Theorem}[section]
\newtheorem{lem}{Lemma}[section]
\newtheorem{defi}{Definition}[section]
\newtheorem{rem}{Remark}[section]
\newtheorem{cor}{Corollary}[section]
\numberwithin{equation}{section}
\numberwithin{figure}{section}
\begin{document}
\title[Two-Layer Quasigeostrophic  Channel Model]{Baroclinic Instability and Transitions in a Two-Layer Quasigeostrophic  Channel Model}

\author[Cai]{Ming Cai}
\address[]{Earth, Ocean, \& Atmospheric Science College of Arts and Sciences\\
Florida State University\\
Tallahassee, Florida 32306-4520}
\email{cai@met.fsu.edu}

\author[Hernandez]{Marco Hernandez}
\address[]{Department of Mathematics,
Indiana University, Bloomington, IN 47405}
\email{hernmarc@indiana.edu}

\author[Ong]{Kiah Wah Ong}
\address[]{Department of Mathematics,
Indiana University, Bloomington, IN 47405, USA and Department of Mathematical and Actuarial Sciences, Universiti Tunku Abdul Rahman, Cheras, 43000 Kajang, Selangor, Malaysia}
\email{kiahong@indiana.edu}

\author[Wang]{Shouhong Wang}
\address[]{Department of Mathematics,
Indiana University, Bloomington, IN 47405}
\email{showang@indiana.edu}

\date{}

\thanks{This research is supported in part by the National Science Foundation (NSF) grant DMS-1515024, and by the Office of Naval Research (ONR) grant N00014-15-1-2662.}

\keywords{baroclinic instability, dynamic transitions, dynamic transition theory, continuous transition, jump transition, two-layer quasigeostrophic channel flow}
\subjclass{86A05, 86A10, 35Q35, 37L05}

\begin{abstract}
The main objective of this article is to derive a mathematical theory associated with the nonlinear stability and dynamic transitions of the basic shear flows associated with baroclinic instability, which plays a fundamental role in the dominant mechanism shaping the cyclones and anticyclones that dominate weather in mid-latitudes, as well as the mesoscale  eddies that play various roles in oceanic dynamics and the transport of tracers. 
This article provides a general transition and stability theory for the two-layer quasi-geostrophic model originally derived by Pedlosky \cite{ped1970}. 
We show that the instability and dynamic transition of the basic shear flow occur only when the Froude number  $F>(\gamma^2+1)/2$, where $\gamma$  is the weave number of lowest zonal harmonic. In this case, we derive a precise critical shear velocity $U_c$ and a dynamic transition number $b$, and we show that the dynamic transition and associated flo0w patterns are dictated by the sign of this transition number. In particular, we show that for $b>0$, the system undergoes a continuous transition, leading to spatiotemporal flow oscillatory patterns as the shear velocity $U$ crosses $U_c$; for $b<0$, the system undergoes a jump transition, leading to drastic change and more complex transition patterns in the far field, and to unstable period solutions for $U<U_c$.
Furthermore, we show that when the wavenumber of the zonal harmonic $\gamma=1$,  the transition number $b$ is always positive and  the system always undergoes a continuous dynamic transition leading spatiotemporal oscillations. This suggests that a continuous transition to spatiotemporal patterns is preferable for the shear flow associated with baroclinic instability.  
\end{abstract}

\maketitle

\section{Introduction}
%Stability, instability and transitions are of fundamental importance in understanding dynamical processes in the atmosphere and in the oceans, and climate variability. In fa

Geophysical fluid flows and climate variability exhibit recurrent large-scale patterns which are directly  linked to dynamical processes represented in the governing dissipative dynamical system of the atmosphere and the ocean. The study of  the persistence of these patterns and the transitions between them are of fundamental importance in geophysics fluid dynamics and climate dynamics. 

Baroclinic instability is one of the most important 
geophysical fluid dynamical instability, and plays a crucial role in understanding  the dominant mechanism shaping the cyclones and anticyclones that dominate weather in mid-latitudes, as well as the mesoscale  eddies that play various roles in oceanic dynamics and the transport of tracers. 

Models for geophysics fluid dynamics and climate dynamics are
based on the conservation laws of fluid mechanics and  consist of
systems of  nonlinear partial differential equations (PDEs).  These
can be put into the perspective of infinite-dimensional dissipative systems
exhibiting  large-dimensional attractors. The global attractor is a mathematical object strongly connected to  the overall dissipation in the system. Weather regimes and climate variability  are, however,  often associated with  dynamic transitions between different regimes, each  represented by local attractors. 

This type of physically-induced stability and transition  leads naturally for us to search for  the full set of  transition states, represented by a local attractor, giving a complete characterization of stability and transition. Such study was initiated in early 2000 by Ma and Wang, and the corresponding dynamical transition theory and its various applications are synthesized in \cite{ptd}; see also 
\cite{b-book, MW09c, MW09a}. In particular, we have  shown that the transitions of all dissipative systems can  be  classified into three classes: continuous, catastrophic and random, which correspond to very different dynamical transition behavior of the system. Basically, as the control parameter passes the critical threshold,  the transition states stay in a close neighborhood of the basic state for a continuous (also called Type-I) transition,  are outside of a neighborhood of the basic state for a catastrophic (also called jump or Type-II) transition. For the random (also called mixed or Type-III) transition, a neighborhood is divided into two open regions with a continuous transition in one region, and a jump transition in the other region. 

The problem of nonlinear stability and transition of baroclinically unstable  waves is not only of fundamental theoretical interest, but also of practical value. The first analytic study of the problem was made by 
Eric Eady \cite{eady49}, 
Joseph Pedlosky \cite{ped1970} in the context  of a single unstable baroclinic wave, followed by the work of Mak \cite{mak1985}, Cai and Mak \cite{cai87} and  Cai  \cite{cai92}, among many others. 
In view of all the previous work for this problem and to the best of our knowledge,  the study of nonlinear stability and  transitions of the basic flows, however, is still  open.  

The main objective of this paper is to study the nonlinear dynamic transition near the onset of baroclinic linear instability. 
The model we adopt is a two-layer quasi-geostrophic model originally derived by \cite{ped1970} and used in \cite{cai87, cai92}. 
The idea of using simplified model to characterize the specific geophysical phenomena goes back at least to  the earlier workers in this field, such as J. Charney,   R.  Fj{\"o}rtoft, and J. von Neumann \cite{charney1950}; see also \cite{dijkstra2005, pedlosky2013}  and the references therein. From the lessons learned by the failure of Richardson's pioneering work, we try to be satisfied with simplified models approximating the actual motions to a greater or lesser degree instead of attempting to deal with the atmosphere/ocean in its full complexity. By starting with models incorporating only what are thought to be the most important of atmospheric influences, and by gradually bringing in others, one is able to proceed inductively and thereby to avoid the pitfalls inevitably encountered when a great many poorly understood factors are introduced all at once.

We explain now the main results obtained in this article. First, we consider the (nondimensional) basic shear flow velocities on the two layers given by 
$$U_1=-U_2 =U,$$
on a nondimensional domain $\mathcal{R}=(0,2\pi \gamma^{-1})\times(0,\pi)$ with  $\gamma>0$  being   the wavenumber of the lowest zonal harmonic.

We show that in the case where the Froude number $F$, defined by \eqref{parameters},  satisfies  $F< (\gamma^2+1)/2$, the basic shear flow is always stable. 

For the case where  $F>(\gamma^2+1)/2$, we show that the instability and dynamic transition occurs at the following critical shear velocity; see also \eqref{Uc}: 
\begin{equation}\label{Uc-1}
U_c^2 \ \myeq \ \min_{\substack{ k,l\geq 1 \\  \lambda_{k,l}<2F}} \frac{1}{2F-\lambda_{k,l}} \left( \frac{F^2\beta^2}{\lambda_{k,l}(F+\lambda_{k,l})^2} + \frac{\lambda_{k,l}(\frac{1}{Re}\lambda_{k,l}+r)^2}{\gamma^2 k^2}\right), 
\end{equation}
where $\lambda_{k,l}=\gamma^2 k^2 +l^2$. In other words, we show that as the shear velocity $U$ crosses the critical value $U_c$, the basic shear flow becomes unstable, and always undergoes a dynamic transition to one the three types as mentioned earlier. 

Then we show that the types of transitions and the detailed structure of the transition states are 
 dictated by the sign of the following nondimensional transition number:
\begin{equation}
b\ \myeq \ (2F-\lambda_{\hat{k},\hat{l}})(\gamma^2\hat{k}^2-\hat{l}^2)+2\hat{l}^2\lambda_{\hat{k},\hat{l}},  
\end{equation}
where $\hat{k} \ge 1$  and $ \hat{l}\ge 1$ are the wave numbers that   minimize in \eqref{Uc-1}.

More precisely, if  the transition number  $b>0$,  as $U$ crosses the critical value $U_c$, the underlying system undergoes a continuous transition leading to periodic solutions (oscillatory mode). The spatiotemporal patterns depicted by the oscillatory mode 
plays an important role in the study of climate variabilities. 

If   the transition number $b<0$, the system undergoes a drastic change (jump transition) at the critical shear velocity $U_c$. Meanwhile, the system bifurcates to an unstable period solution for $U<U_c$. In this case,  as $U$ increasingly crosses $U_c$,  the physical transition states, represented by a local attractor, display more complex structure.

We remark  that when the wavenumber of the zonal harmonic $\gamma=1$,  the transition number $b$ is always positive, so that  the system always undergoes a continuous dynamic transition leading spatiotemporal oscillations described by  \eqref{approx_soln}. This suggests that a continuous transition to spatiotemporal patterns is favorable for the shear flow associated with baroclinic instability.  

The key mathematical ingredient of the study in this article is to fully characterize the interactions of different modes (low-low, low-high and high-high), and their effects on the transition types and transitions states.  The central gravity of the analysis consists of the following: 

\begin{itemize}
\item full description of both stable and unstable modes for the linearized problem formulated as the principle of exchange of stabilities (PES),
\item  deriving the leading-order approximations of the slaving of the stable modes in terms of the unstable modes,
\item reducing the full governing  partial differential equations to the center manifold generated by the unstable modes, and 
\item characterizing the dynamical transitions and its physical implications. 
\end{itemize}

The paper is organized as follows. In Section 2 we briefly introduce the two-layer quasigeostrophic model for baroclinic instability based on \cite{ped1970, cai87}. The main theorem of the paper is given in Section 3.
For completeness, the well-posedness of the model is presented in Section 4; our arguments rely heavily on the spectral properties of the linearized system, the classical theory of semigroups, and some appropriate a priori estimates. Section 5 establishes the principle of exchange of stabilities (PES).
Section 6 provides  the proof of the main theorem, Theorem \ref{transition}.

\section{A Two-Layer Model for Baroclinic Instability}

We start with the basic two-layer model proposed  in \cite{ped1970}, in which two layers of  homogeneous fluids, each with a different but constant density, lie on a horizontal plane rotating with angular velocity $\Omega$. 

Then the two-layer quasi-geostrophic model is derived as the leading order equations with respect to the Rossby number 
\begin{equation}
\varepsilon={\U}/(f_0 L), \label{rossby}
\end{equation}
where $\U$ is the typical horizontal velocity, and $f_0$  is the leading term in  the classical Coriolis expansion $2\Omega = f_0 + \beta' y'$, $y'=Ly$, and $L$ is the typical horizontal scale. 
The model obtained by the above procedure is  given by (2.13) in \cite{ped1970}, and reproduced below
\begin{align}
&\left[\frac{\partial}{\partial t} + \frac{\partial \psi_1}{\partial x} \frac{\partial}{\partial y}- \frac{\partial \psi_1}{\partial y} \frac{\partial}{\partial x}\right] 
\left[\Delta \psi_1 + F (\psi_2-\psi_1) + \beta y\right] = -r \Delta \psi_1, \\
&\left[\frac{\partial}{\partial t} + \frac{\partial \psi_2}{\partial x} \frac{\partial}{\partial y}- \frac{\partial \psi_2}{\partial y} \frac{\partial}{\partial x}\right] 
\left[\Delta \psi_2 + F (\psi_1-\psi_2) + \beta y\right] = -r \Delta \psi_2, 
\end{align}
where $\psi_1$ and $\psi_2$ are the geostrophic stream-functions for the horizontal motion, and the nondimensional parameters are given as follows:

\begin{equation}
\label{parameters}
\begin{aligned}
&E=\frac{2\nu}{f_0 D^2} && \text{the Ekman number,}\\
&\beta=\frac{\beta' L^2}{\U} && \text{the planetary vorticity factor,}\\
&\frac{\Delta \rho}{\rho} = \frac{\rho_2-\rho_1}{\rho_2} && \text{the density ratio,}\\
&F=\frac{f_0^2 L^2}{\frac{\Delta\rho}{\rho} g \frac{D}{2}} && \text{the internal rotational Froude number,}\\
&\delta=\frac{D}{L} && \text{the aspect ratio.}
\end{aligned}
\end{equation}

In the derivation of the above equations the ratio $\frac{E}{\epsilon}$ has been neglected, as is customary when boundary layer effects are not considered. However, for mathematical purposes, it is convenient to keep the associated viscous terms. The equations then become

\begin{align}
&\left[\frac{\partial}{\partial t} + \frac{\partial \psi_1}{\partial x} \frac{\partial}{\partial y}- \frac{\partial \psi_1}{\partial y} \frac{\partial}{\partial x}\right] 
\left[\Delta \psi_1 + F (\psi_2-\psi_1) + \beta y\right] = -r \Delta \psi_1+\frac{1}{Re}\Delta^2 \psi_1, \\
&\left[\frac{\partial}{\partial t} + \frac{\partial \psi_2}{\partial x} \frac{\partial}{\partial y}- \frac{\partial \psi_2}{\partial y} \frac{\partial}{\partial x}\right] 
\left[\Delta \psi_2 + F (\psi_1-\psi_2) + \beta y\right] = -r \Delta \psi_2+\frac{1}{Re}\Delta^2\psi_2, 
\end{align}
where $Re$ is the Reynolds number,
\begin{equation}
\frac{2}{Re}= \frac{2 \nu L}{\U D} = \frac{E}{\epsilon}.
\end{equation}

Our analysis requires this quantity to be strictly positive, but the final results are qualitatively independent of its value as long as it is indeed a small number.

The basic (shear-type) flow is given by
\begin{align*}
\psi^{(0)}_1 &= -U_1 y, \\
\psi^{(0)}_2 &= -U_2 y.
\end{align*}

Now set
\begin{equation}
\varphi_i = \psi_i - \psi^{(0)}_i \qquad i=1,\,2.
\end{equation}

The equations for the deviation stream-functions $\varphi_1$ and $\varphi_2$ are given by
\begin{align}
\left[ \frac{\partial}{\partial t} + U_1 \frac{\partial}{\partial x} + \frac{\partial \psi_1}{\partial x} \frac{\partial}{\partial y}- \frac{\partial \psi_1}{\partial y} \frac{\partial}{\partial x} \right]\left[\Delta\varphi_1+F(\varphi_2-\varphi_1)+(U_1-U_2)y+\beta y\right] \\
\nonumber =-r\Delta \varphi_1+\frac{1}{Re}\Delta^2\varphi_1, \\
\left[ \frac{\partial}{\partial t} + U_2 \frac{\partial}{\partial x} + \frac{\partial \psi_1}{\partial x} \frac{\partial}{\partial y}- \frac{\partial \psi_1}{\partial y} \frac{\partial}{\partial x} \right]\left[\Delta\varphi_2+F(\varphi_1-\varphi_2)+(U_2-U_1)y+\beta y\right] \\
\nonumber =-r\Delta \varphi_2+\frac{1}{Re}\Delta^2\varphi_2.
\end{align}

The case considered by Mak \cite{mak1985}, Cai and Mak \cite{cai87} is that a westerly mean shear, $2U$, with an associated meridional thermal contract, is imposed as a potential source of energy to sustain the wave field, and thus counteracting dissipation. In the above framework this amounts to setting
\begin{equation}
U_1=U\quad\text{and}\quad U_2=-U.
\end{equation}

In this case, if we let
\begin{align}
& \psi=\frac{1}{2}\left(\varphi_1+\varphi_2\right)=\frac{1}{2}\left(\psi_1+\psi_2\right), \\
& \theta=\frac{1}{2}\left(\varphi_1-\varphi_2\right)=\frac{1}{2}\left(\psi_1-\psi_2\right)+Uy,
\end{align}
then the resulting equations for $\psi$ and $\theta$ are
\begin{equation}\label{main1}
\begin{aligned}
& \frac{\partial}{\partial t}\Delta\psi + J(\psi,\Delta\psi+\beta y) + J(\theta-Uy,\Delta\theta) = -r\Delta\psi+\frac{1}{Re}\Delta^2\psi, \\
& \frac{\partial}{\partial t}\left(\Delta\theta-2F\theta\right)+2FJ(\theta-Uy,\psi)+J(\psi,\Delta\theta) + J(\theta-Uy,\Delta\psi+\beta y)\\
& \qquad \qquad  = -r\Delta\theta+\frac{1}{Re}\Delta^2\theta, 
\end{aligned}
\end{equation}
where we used the notation $J(u,v)=\frac{\partial u}{\partial x}\frac{\partial v}{\partial y}-\frac{\partial u}{\partial y}\frac{\partial v}{\partial x}$.

This system is the one studied throughout the rest of the article. The non-dimensional domain is taken to be the rectangle 
\begin{equation} \label{domain}
\mathcal{R}=(0,2\pi \gamma^{-1})\times(0,\pi),
\end{equation}
where $\gamma>0$  is  the wavenumber of the lowest zonal harmonic.

The system is supplemented with periodic boundary conditions on the horizontal axis, 
\begin{equation}\label{per}
\psi(x,y,t)=\psi\left(x+\frac{2\pi}{\gamma},y,t\right),\ \ \  \theta(x,y,t)=\theta\left(x+\frac{2\pi}{\gamma},y,t\right),
\end{equation}
whereas, on the upper and lower segments of the boundary, we impose 
\begin{align}
&  
\label{bdry}
\begin{aligned}
&\psi(x,0,t)=0, \qquad && \theta(x,0,t) = 0, \\ 
&\psi(x,\pi,t)=0, \quad && \theta(x,\pi,t) = 0,
\end{aligned}
\\
&  \label{bdry2}
\begin{aligned}
&\Delta \psi(x,0,t)=0, \qquad &&\Delta \theta(x,0,t) = 0, \\ 
&\Delta \psi(x,\pi,t)=0, \qquad &&\Delta \theta(x,\pi,t) = 0.
\end{aligned}
\end{align}

\section{Mathematical Formulation and Main Theorems}

For every $v\in L^2(\mathcal{R})$ we let $u=R_1 v$ be the unique solution of the Dirichlet problem
\[
\begin{aligned}
& -\Delta u = v &&  \text{in } \mathcal{R},\\
& u=0  &&  \text{on } \partial\mathcal{R}.
\end{aligned}
\]
Note that $R_1$ is a bounded linear map from $L^2(\mathcal{R})$ into $H^2(\mathcal{R})\cap H^1_0(\mathcal{R})$.

Similarly, we let $u=R_2 v$ be the unique solution of
$$
\begin{aligned}
& -\Delta u +2F u= v  &&  \text{in } \mathcal{R},\\
& u=0  &&  \text{on } \partial\mathcal{R}.
\end{aligned}
$$

Note that since $F\geq 0$ the operator $R_2$ is well defined and, as $R_1$, it maps $L^2(\mathcal{R})$ continuously into $H^2(\mathcal{R})\cap H^1_0(\mathcal{R})$.

By making use of the boundary conditions we see that $R_1(-\Delta)u=u$ and $R_2(-\Delta u+2Fu)=u$, for any smooth $u$ satisfying \eqref{per} and \eqref{bdry}. Furthermore, if $u$ also satisfies \eqref{bdry2}, then one has $R_1 \Delta^2 u = -\Delta u$, and even
\[
R_2 \Delta^2 u = -\Delta u -2F u+4F^2 R_2 u.
\]

Thus, applying $R_1$ and $R_2$ to equations \eqref{main1},  we obtain
\begin{align*}
&
\frac{\partial\psi}{\partial t} = -r\psi +\frac{1}{Re}\Delta\psi +R_1\left( J(\psi,\Delta\psi+\beta y)+J(\theta-Uy,\Delta\theta)\right), 
\\
& 
\frac{\partial\theta}{\partial t} = -r\theta +2rF R_2\theta+\frac{1}{Re}\left( \Delta\theta+2F \theta -4F^2 R_2 \theta\right)\\
&\qquad \qquad +R_2\left(J(-\Delta\theta+2F(\theta-Uy),\psi)+J(\theta-Uy,\Delta\psi+\beta y)\right).
\end{align*}

The above suggests that we introduce the linear operator $L=-A+B$, where
\[
A(\psi,\theta)=
\begin{pmatrix} 
r\psi-\frac{1}{Re}\Delta\psi \\
r\theta-2rFR_2\theta-\frac{1}{Re} \left( \Delta\theta+2F \theta -4F^2 R_2 \theta\right)
\end{pmatrix},  
\]

\[
B(\psi,\theta)=
\begin{pmatrix} 
R_1(\beta J(\psi,y)+UJ(\Delta\theta,y)) \\
R_2(2FU J(\psi,y)+\beta J(\theta,y)+UJ(\Delta\psi,y))
\end{pmatrix},  
\]
and the bilinear operator
\[
G\left((\psi,\theta),(\tilde{\psi},\tilde{\theta})\right)=
\begin{pmatrix} 
R_1(J(\psi,\Delta\tilde{\psi})+J(\theta,\Delta\tilde{\theta})) \\
R_2(J(-\Delta\theta+2F\theta,\tilde{\psi})+J(\theta,\Delta\tilde{\psi}))
\end{pmatrix}.
\]

With these definitions the initial value problem associated with \eqref{main1} can now be written in the more compact form
\begin{equation}\label{abst}
\begin{aligned}
& \dfrac{du}{dt} = L u + G(u,u),\\
& u(0)=u_0=(\psi_0,\theta_0).
\end{aligned}
\end{equation}

In order to study the above evolution equation we introduce the Hilbert spaces
\begin{align*}
& V=\left\lbrace u=(\psi,\theta) \in H^3(\mathcal{R})^2:\: \psi\text{ and } \theta \text{ satisfy \eqref{per}, \eqref{bdry} and \eqref{bdry}}\right\rbrace,
\\
& 
H=\left\lbrace u=(\psi,\theta) \in H^2(\mathcal{R})^2:\: \psi\text{ and } \theta \text{ satisfy \eqref{per}, \eqref{bdry}}\right\rbrace.
\end{align*}

Our first result, pertaining to the well-posedness of \eqref{main1}, can now be formulated as follows.

\begin{thm}\label{existence}
For every  $T>0$ and $\psi_0,\,\theta_0 \in H^2(\mathcal{R})$ satisfying \eqref{per} and \eqref{bdry} there exists a unique solution $u=(\psi,\theta)$ of \eqref{abst} in $C([0,T];H)\cap L^2(0,T; V)$.
\end{thm}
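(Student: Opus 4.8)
The plan is a Faedo--Galerkin construction whose crucial inputs are (i) the conservation of energy by the inviscid dynamics and (ii) the two-dimensional control of the vorticity by the Jacobian nonlinearities; the strictly positive viscosity $\tfrac1{Re}>0$ then supplies the $L^2(0,T;V)$ regularity. I would use the orthogonal basis of $H$ --- which is simultaneously a basis of $V$ --- consisting of the trigonometric modes $e^{i\gamma kx}\sin(ly)$, $k\in\Z$, $l\ge1$: these are smooth, satisfy \eqref{per}, \eqref{bdry} and \eqref{bdry2}, and simultaneously diagonalise $-\Delta$, $R_1$, $R_2$ (and hence $A$ itself). Projecting \eqref{abst} onto the span $H_m$ of the first $m$ modes produces a quadratic system of ODEs for $u_m=(\psi_m,\theta_m)$, uniquely solvable on a short interval by Cauchy--Lipschitz; the a priori bounds below rule out blow-up, hence yield existence on all of $[0,T]$.

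The core of the argument is the a priori estimate, which is most transparent when carried out on the PDE system \eqref{main1} (the projected system being handled identically, since $R_1,R_2$ are continuous and invert $-\Delta$, $-\Delta+2F$ under the boundary conditions). Testing the first equation of \eqref{main1} against $\psi_m$ and the second against $\theta_m$ and adding, the viscous and friction terms produce the coercive contribution $\tfrac1{Re}(\|\Delta\psi_m\|^2+\|\Delta\theta_m\|^2)+r(\|\nabla\psi_m\|^2+\|\nabla\theta_m\|^2)$, while \emph{all} of the genuinely quadratic Jacobian terms cancel against one another --- by $\int_{\mathcal R}J(f,g)f=0$ and the antisymmetry $\int_{\mathcal R}J(f,g)h=-\int_{\mathcal R}J(f,h)g$ (valid since $\psi_m$ and $\theta_m$ vanish on the Dirichlet part of $\partial\mathcal R$, cf.\ \eqref{bdry}) --- reflecting the conservation of the energy $\tfrac12\|\nabla\psi_m\|^2+\tfrac12\|\nabla\theta_m\|^2+F\|\theta_m\|^2$ by the inviscid system; the leftover terms, all carrying a factor of $U$, $\beta$ or the zeroth-order $2F$-coefficient, are of lower order and absorbed via Young's inequality and Gronwall, giving a uniform bound in $L^\infty(0,T;H^1(\mathcal R)^2)\cap L^2(0,T;H^2(\mathcal R)^2)$. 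One derivative higher --- testing the first equation against $\Delta\psi_m$ and the second against $\Delta\theta_m$ --- the leading self-interactions again vanish, $\int_{\mathcal R}J(\psi_m,\Delta\psi_m)\Delta\psi_m=0=\int_{\mathcal R}J(\psi_m,\Delta\theta_m)\Delta\theta_m$ (using $\Delta\psi_m=\Delta\theta_m=0$ on $\partial\mathcal R$, cf.\ \eqref{bdry2}), and the coupling between the barotropic and baroclinic vorticities cancels between the two equations, $\int_{\mathcal R}J(\theta_m,\Delta\theta_m)\Delta\psi_m+\int_{\mathcal R}J(\theta_m,\Delta\psi_m)\Delta\theta_m=\int_{\mathcal R}J(\theta_m,\,\Delta\theta_m\,\Delta\psi_m)=0$. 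The single surviving quadratic term $2F\int_{\mathcal R}J(\theta_m,\psi_m)\Delta\theta_m$, along with the $U$- and $\beta$-terms, is estimated by the two-dimensional interpolation inequality $\|\nabla f\|_{L^4}^2\lesssim\|\nabla f\|_{L^2}\|\Delta f\|_{L^2}$ and Young's inequality and, after invoking the $L^\infty(0,T;H^1)$ bound already obtained, is dominated by $\varepsilon(\|\nabla\Delta\psi_m\|^2+\|\nabla\Delta\theta_m\|^2)+C\big(\|\Delta\psi_m\|^2+\|\Delta\theta_m\|^2\big)$, a bound \emph{linear} in the quantity being estimated; Gronwall then closes the loop, yielding $u_m$ bounded in $L^\infty(0,T;H)\cap L^2(0,T;V)$ uniformly in $m$, and then, from the equation, $du_m/dt$ bounded in $L^2(0,T;V')$.

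With these bounds in hand, Aubin--Lions produces a subsequence converging weak-$\ast$ in $L^\infty(0,T;H)$, weakly in $L^2(0,T;V)$ and strongly in $L^2(0,T;H^{2-\delta}(\mathcal R)^2)$; the strong convergence suffices to pass to the limit in the quadratic term $G(u_m,u_m)$, so the limit $u$ solves \eqref{abst}. That $u\in C([0,T];H)$ follows from $u\in L^2(0,T;V)$ together with $u_t\in L^2(0,T;V')$ via the Lions--Magenes lemma (recall $V\hookrightarrow H\hookrightarrow V'$ is a Gelfand triple with $H=[V,V']_{1/2}$). For uniqueness, the difference $w=u-v$ of two solutions satisfies $w_t=Lw+G(w,u)+G(v,w)$; pairing in $H$ and reusing the same cancellations and two-dimensional estimates gives $\tfrac{d}{dt}\|w\|_H^2\le C(t)\|w\|_H^2$ with $C\in L^1(0,T)$, whence $w\equiv0$. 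I expect the principal obstacle to be the bookkeeping at the $H^2$ level: each coupling Jacobian --- notably the top-order pieces $J(\psi,\Delta\theta)$ and $J(\theta-Uy,\Delta\psi+\beta y)$, which carry three derivatives --- must be rewritten in divergence form, and the two equations paired in precisely the right combination, so that the $\tfrac1{Re}$-dissipation in $H^3$ genuinely absorbs them; this is exactly the mechanism familiar from two-dimensional Navier--Stokes and quasi-geostrophic theory, and no essentially new difficulty arises.
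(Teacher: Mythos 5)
Your proposal is correct and rests on the same core computation as the paper's proof: a two-level energy estimate in which, at the $H^1$-level, all Jacobian terms cancel exactly (reflecting the energy integral $\tfrac12\|\nabla\psi\|^2+\tfrac12\|\nabla\theta\|^2+F\|\theta\|^2$), while at the $H^2$-level the self-interactions vanish, the barotropic/baroclinic coupling $\int J(\theta,\Delta\theta)\Delta\psi+\int J(\theta,\Delta\psi)\Delta\theta$ cancels between the two equations, and the one surviving quadratic Jacobian is absorbed into the $\tfrac1{Re}$-dissipation by Gagliardo--Nirenberg and Gronwall, using the $L^2(0,T;H)$ bound already obtained at the first level. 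Your identification of the relevant cancellations and of the single surviving nonlinear term is consistent with the paper's calculation.

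Where you genuinely differ from the paper is in the local-existence machinery. The paper rewrites \eqref{abst} in mild form $u(t)=e^{tL}u_0+\int_0^t e^{(t-s)L}G(u,u)\,ds$, verifies that $L=-A+B$ is sectorial (with $A$ self-adjoint, positive, $V=D(A^{1/2})$, and $B$ bounded from $V$ to $H$), shows $G$ is locally Lipschitz from $V$ to $H$, and invokes the theory of analytic semigroups (Lunardi) to obtain a unique local strong solution whose maximal time coincides with the blow-up time of the $H$-norm; the a priori estimates then give global existence directly, and $C([0,T];H)$ regularity comes for free from the mild formulation. You instead propose Faedo--Galerkin in the trigonometric basis, uniform a priori bounds, Aubin--Lions compactness, a limit passage in $G(u_m,u_m)$, the Lions--Magenes lemma for the $C([0,T];H)$ continuity, and a separate energy argument for uniqueness. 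Both routes are standard and both work here: the semigroup route is faster once sectoriality is established and yields strong solutions immediately, while your Galerkin route is more self-contained and avoids semigroup theory at the cost of a compactness argument and a separate uniqueness proof. One small point worth keeping an eye on in your write-up: when passing to the limit in $G(u_m,u_m)$, note that the Jacobians involve three derivatives, so you should pair the strong $L^2(0,T;H)$ convergence from Aubin--Lions (giving $\Delta\psi_m\to\Delta\psi$ and $\nabla\psi_m\to\nabla\psi$ strongly) with the weak $L^2(0,T;V)$ convergence of the top-order factor, after moving one derivative onto the test function; this is routine but should be said explicitly.
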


Next we study the dynamic stability and transition  behavior of the zonal flow, represented here by the trivial solution $\psi=\theta\equiv 0$. For this purpose, we introduce the following critical shear, defined for $2F>\gamma^2+1$ by
\begin{equation}\label{Uc}
U_c^2 =\min_{\substack{ k,l\geq 1 \\  \lambda_{k,l}<2F}} \mathcal{U}^2(k,l), 
\end{equation}
where 
\begin{align}
& \lambda_{k,l}=\gamma^2 k^2 +l^2,  \label{lambda}\\
& \mathcal{U}^2(k,l)=\frac{1}{2F-\lambda_{k,l}} \left( \frac{F^2\beta^2}{\lambda_{k,l}(F+\lambda_{k,l})^2} + \frac{\lambda_{k,l}(\frac{1}{Re}\lambda_{k,l}+r)^2}{\gamma^2 k^2}\right). \label{u-defined}
\end{align}
Here we recall that the $\gamma>0$  is  the wavenumber of the lowest zonal harmonic. 

Hereafter we always assume that the minimum in \eqref{Uc} is attained at a unique pair $(\hat{k},\hat{l})$  ($\hat{k} \ge 1, \hat{l}\ge 1$):
\begin{equation}\label{Uc-attained}
U_c^2 = \mathcal{U}^2(\hat{k},\hat{l}).
\end{equation}
It is clear that this assumption is generically true. 

 Next we define a bifurcation number:
\begin{equation}\label{cont_condition}
b=(2F-\lambda_{\hat{k},\hat{l}})(\gamma^2\hat{k}^2-\hat{l}^2)+2\hat{l}^2\lambda_{\hat{k},\hat{l}}.
\end{equation}
The  main result of the paper is then the following.

\begin{thm}\label{transition}

\begin{enumerate}

\item Let $2F\leq \gamma^2+1$. Then for system  \eqref{main1}, 
the equilibrium $(\psi,\theta)\equiv (0,0)$ is a stable solution to \eqref{main1} for all values of $U$.

\item Let $2F>\gamma^2+1$. If $b> 0$ with $b$ defined by (\ref{cont_condition}), then \eqref{main1} undergoes a continuous transition to a stable periodic orbit as $U$ crosses $U_c$. More precisely, the trivial solution $(\psi,\theta)\equiv (0,0)$ is stable for $U\leq U_c$, and for $U>U_c$ it bifurcates to a stable periodic orbit, approximately given by
\begin{equation}\label{approx_soln}
\begin{aligned}
\psi(x,y,t) &= \rho \gamma \hat{k} U_c \sin\left( \omega t+ \gamma \hat{k}x\right)\sin\hat{l}y+O(|U-U_c|),\\
\theta(x,y,t)&=\rho \left(\frac{1}{Re}\lambda_{\hat{k},\hat{l}}+r\right)\cos\left(\omega t+\gamma \hat{k}x\right)\sin\hat{l}y\\
&+\rho \dfrac{\gamma\hat{k}\beta F}{\lambda_{\hat{k},\hat{l}}(F+\lambda_{\hat{k},\hat{l}})}\sin\left(\omega t+\gamma \hat{k}x\right)\sin\hat{l}y+O(|U-U_c|),
\end{aligned}
\end{equation}
where
\begin{equation}\label{rho omega thm}
\begin{aligned}
\rho^2&=\frac{4(2F-\lambda_{\hat{k},\hat{l}})(\frac{1}{Re}\lambda_{0,2\hat{l}}+r)(U-U_c)}{ \gamma^2 \hat{k}^2 F[  (2F-\lambda_{\hat{k},\hat{l}})(\gamma^2\hat{k}^2-\hat{l}^2)+2\hat{l}^2\lambda_{\hat{k},\hat{l}}] (\frac{1}{Re}\lambda_{\hat{k},\hat{l}}+r)}, \\
\omega&=\frac{\gamma \hat{k} \beta}{F+\lambda_{\hat{k},\hat{l}}}+O\left(|U-U_c|^{{3}/{2}}\right).
\end{aligned}
\end{equation}

\item Let $2F>\gamma^2+1$. If $b< 0$, then  \eqref{main1} undergoes a jump transition 
 as $U$ crosses $U_c$. Also, the system bifurcates to an unstable periodic solution of the form similar to (\ref{approx_soln}) for $U<U_c$. 

\end{enumerate}
\end{thm}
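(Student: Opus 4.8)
The plan is to treat the statement as a dynamic transition problem and apply the abstract Hopf-type transition theorems of Ma and Wang (see \cite{ptd}): part~(1) is purely a linear-stability fact, while parts~(2) and~(3) constitute a Hopf bifurcation whose super/subcritical character is read off from the sign of the real part of a cubic coefficient in a two-dimensional reduced equation, which will turn out to be a positive multiple of $-b$. Concretely, I would split the work into four steps: (i) a spectral analysis of the linear operator $L=-A+B$ yielding the principle of exchange of stabilities (PES); (ii) an explicit description of the critical eigenspace and of the conjugate eigenfunctions; (iii) a leading-order computation of the center-manifold (slaving) function; and (iv) the reduction to a planar normal form, followed by the invocation of the transition theorem to identify the bifurcated object with the periodic orbit in \eqref{approx_soln}.

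\textbf{Step (i): linear analysis and PES.} Expanding in the Fourier basis $\{e^{i\gamma k x}\sin(ly)\}$ adapted to \eqref{per}--\eqref{bdry2}, the operator $L$ block-diagonalizes; on each mode with $k\ge 1$ it reduces to a $2\times 2$ matrix acting on the $(\psi,\theta)$ coefficients, while on the zonal modes $k=0$ one has $B\equiv 0$ (because $J(\,\cdot\,,y)$ kills $x$-independent functions), so those modes are strictly dissipative for every $U$. For $k\ge 1$ the characteristic polynomial is quadratic in the eigenvalue, and a conjugate pair can meet the imaginary axis only when $\lambda_{k,l}<2F$; the first such crossing occurs exactly at $U=U_c$ from \eqref{Uc}, at which point the critical mode $(\hat k,\hat l)$ carries eigenvalues $\pm i\omega$ with $\omega$ equal, to leading order, to the value in \eqref{rho omega thm} and in particular $\omega\neq 0$, while every other mode stays strictly stable; one also verifies the transversality $\tfrac{d}{dU}\Re\sigma(U)\big|_{U=U_c}>0$. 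When $2F\le\gamma^2+1$ one has $\lambda_{k,l}\ge\gamma^2+1\ge 2F$ for all $k,l\ge 1$, so no crossing is ever possible and a direct estimate gives $\Re\sigma<0$ uniformly in $U$; this is part~(1). All of this is the content of Section~5, which I would cite.

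\textbf{Steps (ii)--(iv): nonlinear reduction.} Writing the critical eigenspace $E_c=\mathrm{span}\{\xi_1,\xi_2\}$ and the conjugate eigenfunctions $\xi_1^\ast,\xi_2^\ast$ of $L^\ast$ explicitly (their spatial shape is precisely the one in \eqref{approx_soln}), I would compute the center-manifold function $\Phi$ to leading order. The crucial structural simplification is that each component of $\phi\in E_c$ is a $\Delta$-eigenfunction on the single mode $(\hat k,\hat l)$; hence in $G(\phi,\phi)$ the Laplacians become scalars and the $\psi$-slot collapses to $-\lambda_{\hat k,\hat l}\bigl(J(\phi_\psi,\phi_\psi)+J(\phi_\theta,\phi_\theta)\bigr)=0$, while the $\theta$-slot reduces to $2F\,R_2 J(\phi_\theta,\phi_\psi)$. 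Because $J\bigl(\cos(\gamma\hat k x)\sin(\hat l y),\sin(\gamma\hat k x)\sin(\hat l y)\bigr)=-\tfrac{1}{2}\gamma\hat k\hat l\sin(2\hat l y)$, the $x$-dependence cancels and only the single zonal mode $(0,2\hat l)$ is excited; solving the decoupled dissipative scalar equation on that mode produces $\Phi$, supported in the $\theta$-component on $(0,2\hat l)$, with the factor $(\tfrac{1}{Re}\lambda_{0,2\hat l}+r)^{-1}$ that ultimately appears in \eqref{rho omega thm}. Substituting $\phi+\Phi$ into $G$, retaining the cubic part $P_c\bigl[G(\phi,\Phi)+G(\Phi,\phi)\bigr]$, and pairing with $\xi_1^\ast,\xi_2^\ast$ gives the reduced planar system; invariance under $x$-translation makes it $SO(2)$-equivariant, hence in standard Hopf normal form up to higher order, and its cubic coefficient has real part equal to a manifestly positive constant times $-b$ with $b$ as in \eqref{cont_condition}. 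The Ma--Wang transition theorem then yields: for $b>0$ a supercritical Hopf bifurcation, i.e.\ a continuous transition with the stable periodic orbit \eqref{approx_soln}, amplitude $\rho$ and frequency $\omega$ as in \eqref{rho omega thm}; for $b<0$ a subcritical bifurcation, i.e.\ a jump transition with an unstable periodic orbit for $U<U_c$.

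\textbf{Main obstacle.} The real work is the bookkeeping of the cubic coefficient: tracking every feedback into $E_c$ (from $G(\phi,\Phi)$ and $G(\Phi,\phi)$, routed through the slaved mode $(0,2\hat l)$), computing the inner products with the conjugate eigenfunctions, and simplifying the resulting rational expression down to the clean quantity $b$ — and in parallel matching the bifurcated solution to the precise $\rho$ and $\omega$ in \eqref{rho omega thm}. A secondary point needing care is the genericity hypothesis \eqref{Uc-attained}, which guarantees $\dim E_c=2$ so that the planar normal-form analysis is legitimate, together with checking that the denominators $\gamma^2\hat k^2$ occurring throughout never vanish since $\hat k\ge 1$.
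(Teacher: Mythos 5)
Your proposal is correct and follows essentially the same route as the paper: PES via block-diagonalization of $L$ on the Fourier modes (Section~5 / Lemma~\ref{PES}), then a center-manifold reduction to the two-dimensional critical space, with the single zonal mode $(0,2\hat{l})$ as the only slaved mode excited by the quadratic term, followed by reading off the transition type from the sign of the real part of the Landau coefficient (Theorem~\ref{reduced}). Your structural observations --- that the $\psi$-slot of $G(\phi,\phi)$ vanishes identically, that the $\theta$-slot collapses to $2F R_2 J(\phi_\theta,\phi_\psi)$ when $\phi_\psi,\phi_\theta$ are $\lambda_{\hat k,\hat l}$-eigenfunctions of $-\Delta$, that the $x$-dependence cancels so only $\sin(2\hat l y)$ survives, and that the cubic coefficient is a positive multiple of $\pm b$ with the bracket reducing to $2b$ via $2\lambda_{\hat k,\hat l}-\lambda_{0,2\hat l}=2(\gamma^2\hat k^2-\hat l^2)$ --- all match the paper's computation in Section~6.

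The one organizational difference is in how the projections onto $H_c$ are computed: you propose pairing against adjoint eigenfunctions $\xi_1^\ast,\xi_2^\ast$ of $L^\ast$, which is the standard Hopf bookkeeping and invokes the $SO(2)$-equivariance to guarantee normal form, whereas the paper instead derives the projection $P_c$ by explicit algebraic manipulation of the complexified eigenfunctions $w^\pm_{k,l}=v^{1,\pm}_{k,l}+i\,v^{2,\pm}_{k,l}$ (equations \eqref{w_to_e}--\eqref{Pce4e3}), yielding closed-form expressions for $P_c e^{(3)}_{k,l}$ and $P_c e^{(4)}_{k,l}$ in terms of $1/(\mu^+_{k,l}-\mu^-_{k,l})$. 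Both devices carry out the same projection; the adjoint-eigenfunction version is arguably cleaner conceptually, while the paper's direct identities keep everything inside the real four-dimensional block $H_{\hat k,\hat l}$ and make the final cancellation leading to $\rho^2$ and $\omega$ in \eqref{rho omega thm} more transparent. Beyond that the proposal and the paper coincide, including the treatment of part~(1) (no crossing possible when $\lambda_{k,l}\ge 2F$ for all admissible $(k,l)$), the transversality check, and the reliance on the genericity assumption \eqref{Uc-attained} to ensure $\dim H_c=2$.
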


Two remarks are now in order.

\begin{rem}
By neglecting the ratio $\frac{E}{\epsilon}=\frac{2}{Re}$ in the above equations we obtain the slightly simpler formula for $\rho$
\begin{equation*}
\rho^2=\frac{4(2F-\lambda_{\hat{k},\hat{l}})(U-U_c)}{ \gamma^2 \hat{k}^2 F[  (2F-\lambda_{\hat{k},\hat{l}}) (\gamma^2\hat{k}^2-\hat{l}^2)+2\hat{l}^2\lambda_{\hat{k},\hat{l}}] }.
\end{equation*}
\end{rem}

\begin{rem} When the nondimensional wavenumber of the zonal harmonic $\gamma=1$, it is easy to see that the transition number $b>0$. Consequently, Assertion (2) of Theorem~\ref{transition}  is always true. Namely, for $\gamma=1$, the system always undergoes a continuous dynamic transition leading spatiotemporal oscillations described by  \eqref{approx_soln}. 
\end{rem}

\section{Well-Posedness of the Initial Boundary Value Problem}

We begin with some basic properties of the linear operator $L$ appearing in \eqref{abst}.

First of all, note that all the trigonometric polynomials satisfying \eqref{per}, \eqref{bdry} and \eqref{bdry2} are of the form
\begin{equation}\label{trig}
\begin{aligned}
& p(x,y)=\sum_{k\in\mathbb{Z}} \sum_{l\geq 1} p_{k,l}\varphi_{k,l}(x,y),\\
&  \varphi_{k,l}(x,y)=e^{i\gamma k x} \sin ly,\qquad 
\overline{p_{k,l}}&=p_{-k,l}, \qquad \forall k\in\mathbb{Z},\, l\ge 1,
\end{aligned}
\end{equation}
where only finitely many of the coefficients $p_{k,l}\in \mathbb{C}$ are non-zero.

Besides providing a founding step for the forthcoming analysis, the fundamental system $\{\varphi_{k,l}\}$ will play a major role in the subsequent sections where we study the local behavior of \eqref{main1} near the origin.

Next we introduce the set
\[ \mathcal{V} = \left\lbrace (p,q)\in C^\infty(\mathcal{R})^2:\: p,\, q \text{ are of the form \eqref{trig}}\right\rbrace.\]

Note that $\mathcal{V}$ is dense in both $H$ and $V$ for their respective norms. To be more precise, it is clear that the functions $\varphi_{k,l}$ satisfy $ -\Delta \varphi_{k,l} = \lambda_{k,l}\varphi_{k,l}$, where we recall that $\lambda_{k,l}=\gamma^2k^2+l^2$. 
Furthermore, the inner product of $H$ evaluated on elements $(p,q)$ and $(p',q')$ in $\mathcal{V}$ is given by
\begin{equation}\label{inner H}
\inner{(p,q)}{(p',q')}_H = \sum_{k,l} \lambda_{k,l}^2 \left( p_{k,l}\overline{p'_{k,l}}+q_{k,l}\overline{q'_{k,l}}\right). 
\end{equation}

In addition, the inner product of $V$ evaluated on elements $(p,q)$ and $(p',q')$ in $\mathcal{V}$ is equivalent to
\begin{equation}\label{inner V}
\inner{(p,q)}{(p',q')}_V = \sum_{k,l} \lambda_{k,l}^3 \left( p_{k,l}\overline{p'_{k,l}}+q_{k,l}\overline{q'_{k,l}}\right). 
\end{equation}

One can also give a detailed description of the operator $L=-A+B$ introduced in the previous section using this fundamental system. Indeed, for $p,q$ of the form \eqref{trig} one has
\begin{equation}\label{A and B op}
\begin{aligned}
A^1_{k,l}(p,q)&=\left(r +\frac{1}{Re}\lambda_{k,l}\right)p_{k,l},\\
A^2_{k,l}(p,q)&=\frac{\lambda_{k,l}}{2F+\lambda_{k,l}}\left(r +\frac{1}{Re}\lambda_{k,l} \right)q_{k,l}, \\
B^1_{k,l}(p,q)&=-i\frac{\gamma k\beta}{\lambda_{k,l}}p_{k,l}+i\gamma k U q_{k,l},\\
B^2_{k,l}(p,q)&=-i\frac{\gamma k\beta}{2F+\lambda_{k,l}}q_{k,l}+i\gamma k U\frac{2F-\lambda_{k,l}}{2F+\lambda_{k,l}} p_{k,l},
\end{aligned}
\end{equation}
where, for a generic linear operator $M$ on $\mathcal{V}$ we use the notation
\[M u= \left( \sum_{k,l}   M^1_{k,l} u\, \varphi_{k,l},\, \sum_{k,l} M^2_{k,l}u\, \varphi_{k,l}\right).\]

In virtue of the above considerations we see that
\[
\begin{aligned}
\inner{Au}{u'}_{H} &= \sum_{k,l} \lambda_{k,l}^2\left(r +\frac{1}{Re}\lambda_{k,l}\right)\left(p_{k,l}\overline{p'_{k,l}}+\frac{\lambda_{k,l}^2}{(2F+\lambda_{k,l})^2} q_{k,l}\overline{q'_{k,l}}\right),\\
\| Bu\|_{H}^2 &= \sum_{k,l} \gamma^2 k^2 \left( \Big| \beta p_{k,l} - \lambda_{k,l} U q_{k,l}\Big|^2\right. \\
&\left.+\frac{\lambda^2_{k,l}}{(2F+\lambda_{k,l})^2}\Big|\beta q_{k,l}-(2F-\lambda_{k,l})U p_{k,l}\Big|^2\right),
\end{aligned} 
\]
whence, by the density of $\mathcal{V}$ in $H$ and $V$, we get for all $u,v\in V$ the estimates
\[
\begin{aligned}
&\inner{Au}{u}_{H}\geq c\left( r \|u\|^2_H+\frac{1}{Re}\|u\|^2_V\right),\\
 &|\inner{Au}{v}_H| \leq r \|u\|_H\|v\|_H+\frac{1}{Re} \|u\|_V\|v\|_V,\\
&\| Bu\|_{H}^2  \leq C \left( \beta^2 \|u\|_{H}^2+ U^2\|u\|_{V}^2\right).
\end{aligned} 
\]

With these constructions in place we are now in position to address the global well-posedness of \eqref{main1}.
\begin{proof}[Proof of Theorem \ref{existence}]
We begin with local existence of strong solutions. More precisely, we consider the mild formulation of \eqref{abst}, which takes the form
\[
u(t) = e^{tL} u_0 + \int_0^t e^{(t-s)L} G(u(s),u(s)) ds.
\]

Recall that $L=-A+B$, where we have shown that $A$ is a self-adjoint positive operator on $H$, whose domain is clearly $D(A)=V\cap H^4(\mathcal{R})^2$, and $B$ is bounded from $V$ into $H$. In fact, one can even show that $V=D(A^{\frac{1}{2}})$. Thus, since the inclusions $D(A)\subset V \subset H$ are all dense and compact, it follows that $L$ is a sectorial operator with compact resolvent and domain $D(L)=D(A)$.

Regarding the non-linear operator $G$, we write $G(u,u)=(G^1(u),\, G^2(u))$, where
\[ 
\begin{aligned}
G^1((p,q))&=R_1(J(p,\Delta p)+J(q,\Delta q)), \\
G^2((p,q))&=R_2(J(p,\Delta q-2F q)+J(q,\Delta p)).
\end{aligned}
\]

Taking the $H^1_0$ inner product of $G^1((p,q))$ with an arbitrary $\xi\in H^2\cap H^1_0$ we get, using the definition of $R_1$,
\[
\inner{ G^1((p,q))}{\xi}_{H^1_0}=\inner{J(p,\Delta p)+J(q,\Delta q)}{\xi}_{L^2}.
\]

Similarly, using the definition of $R_2$ we see that
\[
\begin{aligned}
\inner{ G^2((p,q))}{\xi}_{H^1_0}=&\inner{J(p,\Delta q-2F q)+J(q,\Delta p)}{\xi}_{L^2}\\
&-2F\inner{R_2(J(p,\Delta q-2F q)+J(q,\Delta p))}{\xi}_{L^2}.
\end{aligned}
\]

Now, using the boundary conditions, it is easy to see that
\[
\inner{J(p,\Delta q)}{\xi}_{L^2} = -\int_\mathcal{R} \Delta q \nabla p \cdot\nabla\xi. 
\]
Then, using the Gagliardo-Nirenberg inequality, we get
\[
|\inner{J(p,\Delta q)}{\xi}_{L^2}|\lesssim \|q\|_{H^3}\|p\|_{H^3}\|\xi\|_{H^1_0}.
\]

Arguing in a similar way for the other cases we conclude that
\[
|\inner{G(u,u)}{v}_H|\lesssim \|u\|^2_V \|v\|_H,\qquad \forall u\in V,\, v\in H.
\]

It follows,  by the standard theory of analytic semigroups (see e.g. \cite{L12}), that \eqref{abst} has a unique local mild solution. Furthermore, the maximal interval of existence coincides with the blow-up time of the $H$ norm of the solution. Thus, in order to obtain global existence, it suffices to get a priori estimates on the $H$ norm of the solution.

Let $u=(\psi,\theta)$ be the unique local mild solution of \eqref{abst}. By testing \eqref{main1} against $\psi$ and $\theta$, and using the cancellation properties of $J$, we arrive at
\begin{align*}
& \frac{1}{2}\frac{d}{dt} \|\psi\|^2_{H^1_0}+r\|\psi\|^2_{H^1_0}+\frac{1}{Re} \|\Delta\psi\|^2_{L^2} = -U\inner{\Delta \theta}{\partial_x\psi}_{L^2}+\inner{J(\psi,\theta)}{\Delta\theta}_{L^2}, \\
& \frac{1}{2}\frac{d}{dt} \left(\|\theta\|^2_{H^1_0}+2F\|\theta\|^2_{L^2}\right)+r\|\theta\|^2_{H^1_0}+\frac{1}{Re} \|\Delta\theta\|^2_{L^2} \\
& \qquad = U\inner{\Delta \theta}{\partial_x\psi}
 -\inner{J(\psi,\theta)}{\Delta\theta}_{L^2}+2FU\inner{\partial_x\psi}{\theta}_{L^2}.
\end{align*}

Adding these equations,  the terms involving $J$ cancel out and, consequently,   we deduce that 
\[ 
\frac{1}{2}\frac{d}{dt}\left( \|\psi\|^2_{H^1_0}+\|\theta\|^2_{H^1_0}+2F\|\theta\|^2_{L^2} \right)+\frac{1}{Re}\left(\|\Delta\psi\|^2_{L^2}+\|\Delta\theta\|^2_{L^2}\right) \leq 2FU \|\psi\|_{H^1_0}\|\theta\|_{L^2}. 
\]
By the Gronwall's inequality, we  can then show that 
there exist constants $c_1,c_2>0$ such that
\[ \|\psi(t)\|^2_{H^1_0}+\|\theta(t)\|^2_{H^1_0} + \int_0^t\left(\|\Delta\psi\|^2_{L^2}+\|\Delta\theta\|^2_{L^2}\right) ds \leq c_1 e^{c_2 t} \left(\|\psi_0\|^2_{H^1_0}+\|\theta_0\|^2_{H^1_0}\right).\]

Next we test \eqref{main1} against $-\Delta \psi$ and $-\Delta\theta$ to get
\begin{align*}
& \frac{1}{2}\frac{d}{dt} \|\Delta\psi\|^2_{L^2}+r\|\Delta\psi\|^2_{L^2}+\frac{1}{Re}
       \|\Delta\psi\|^2_{H^1_0} \\
&\qquad  =-\beta\inner{\partial_x\psi}{\Delta\psi}_{L^2} 
   + U\inner{\Delta \theta}{\partial_x\Delta\psi}_{L^2}-\inner{J(\theta,\Delta\theta)}{\Delta\psi}, \\
& \frac{1}{2}\frac{d}{dt} \left(\|\Delta\theta|^2_{L^2}+2F\|\theta\|^2_{H^1_0}\right) 
   +r\|\Delta\theta\|^2_{L^2}+\frac{1}{Re} \|\Delta\theta\|^2_{H^1_0} \\
& \qquad =-\beta\inner{\partial_x\theta}{\Delta\theta}_{L^2}
-U\inner{\Delta \theta}{\partial_x\Delta\psi}_{L^2}\\
& \qquad \qquad +\inner{J(\theta,\Delta\theta)}{\Delta\psi}_{L^2}-2F\inner{J(\theta,\Delta\psi)+U\partial_x\psi}{\Delta\theta}_{L^2}.
\end{align*}
Hence 
\[
\frac{1}{2}\frac{d}{dt} \left(\|u\|^2_H+2F\|\theta\|^2_{H^1_0}\right)+\frac{1}{Re} \|u\|^2_V \leq (\beta+2F)\|u\|_H\|u\|_V-2F\inner{J(\theta,\Delta\psi)}{\Delta\theta}_{L^2}.
\]

The last term above can be bounded as
\[
|\inner{J(\theta,\Delta\psi)}{\Delta\theta}_{L^2}|\lesssim \|\theta\|_{H^3} \|\theta\|_{H^2} \|\psi\|_{H^2} \leq \epsilon \|u\|^2_V+C_\epsilon \|u\|^4_H, 
\]
which shows, upon substitution in the previous estimate and applying Gronwall's inequality, that
\[ \|u(t)\|^2_H +\int_0^t \|u\|^2_V ds \leq C \exp\left( k t + \int_0^t \|u\|^2_H ds \right) \|u_0\|^2_H. \]

Since the integral on the right hand side is known to be controlled by $\|u_0\|^2_H$ by the preceding estimate, the above provides the a priori estimate required to establish  the global existence of the solutions. The proof is then complete.
\end{proof}

\section{Linear Analysis and the Principle of Exchange of Stability}

To study the nonlinear dynamic stability and transitions of the underlying system, we need a clear understanding on the eigenvalue problem for the linearized problem around the basic flow. 
The linearized evolution is governed by the operator $L$ introduced in the preceding section, which is, essentially, a perturbation of $-A$; see \eqref{A and B op}. In fact, since the spectrum and eigenfunctions of the latter can be found explicitly, it is expected that in terms of this basis the spectral properties of $L$ can be described more or less explicitly. In what follows we make these ideas precise.

The eigenfunctions of $A$ are given by
\begin{equation}\label{ei_kl}
\begin{aligned}
& e^{(1)}_{k,l} (x,y) = ( \sin(\gamma kx)\sin(ly), 0 ),\quad & e^{(2)}_{k,l} (x,y) = ( \cos(\gamma kx)\sin(ly), 0 ), \\
& e^{(3)}_{k,l} (x,y) = ( 0, \sin(\gamma kx)\sin(ly) ),\quad & e^{(4)}_{k,l} (x,y) = ( 0, \cos(\gamma kx)\sin(ly) ).
\end{aligned}
\end{equation}
where $k\geq 0$ and $l\geq 1$ are integers.

Now, although $L$ does not preserve the corresponding eigenspaces, it does have invariant spaces that can be described entirely using \eqref{ei_kl}. Indeed, for every pair $(k,l)$, the set $H_{k,l}\subset D(A)$ given by
\begin{equation}\label{Hkl}
H_{k,l}=\text{span }\left\lbrace e^{(1)}_{k,l},e^{(2)}_{k,l},e^{(3)}_{k,l},e^{(4)}_{k,l}\right\rbrace
\end{equation}
is invariant under $L$.

Moreover, the direct sum $\bigoplus_{k\geq 0,\, l\geq 1} H_{k,l}$ is dense in both $D(A)$ and $H$, for the respective norms. Therefore, the spectral properties of $L$ are completely determined by its action on this family of finite dimensional spaces. This fact considerably simplifies the forthcoming analysis, and will be used extensively in the remainder of this article.

Our first result is a complete quantitative description of the spectrum of $L$.

\begin{lem}

\begin{enumerate}
\item The spectrum of $L$ is given by $\sigma(L)=\{\mu^\pm_{k,l}\}_{k\geq 0, l\geq 1}$, where
\begin{equation}\label{mupm}
\begin{aligned}
\mu^\pm_{k,l}&=\dfrac{-(F+\lambda_{k,l})\eta_{k,l} \pm \sqrt{F^2\eta_{k,l}^2+\gamma^2 k^2U^2 (4F^2-\lambda_{k,l}^2)}}{2F+\lambda_{k,l}},\\
\eta_{k,l}&=\frac{1}{Re} \lambda_{k,l}+r+i\frac{\gamma k\beta}{\lambda_{k,l}}.
\end{aligned}
\end{equation}

\item To every integer pair $(k,l)$, with $k\geq 0$ and $l\geq 1$, there correspond two linearly independent subspaces $H_{k,l,+}$ and $H_{k,l,-}$, invariant under $L$, given by
\begin{equation}\label{Hkl+-}
H_{k,l,\pm} = \text{span }\left\lbrace v^{1,\pm}_{k,l} , v^{2,\pm}_{k,l} \right\rbrace,
\end{equation}
where
\begin{equation}\label{Hc}
\begin{aligned}
v^{1,\pm}_{k,l} & =\gamma kU e^{(1)}_{k,l}+ \left[ \Re(\mu^\pm_{k,l}+\eta_{k,l}) e^{(4)}_{k,l}+\Im(\mu^\pm_{k,l}+\eta_{k,l}) e^{(3)}_{k,l}\right], \\
v^{2,\pm}_{k,l} & =\gamma kUe^{(2)}_{k,l}+ \left[ \Im(\mu^\pm_{k,l}+\eta_{k,l}) e^{(4)}_{k,l}-\Re(\mu^\pm_{k,l}+\eta_{k,l}) e^{(3)}_{k,l}\right].
\end{aligned}
\end{equation}

\item The functions given by \eqref{Hc} satisfy the identities
\begin{equation}\label{LHkl}
Lv^{1,\pm}_{k,l} = \alpha^\pm_{k,l} v^{1,\pm}_{k,l}-\sigma^+_{k,l} v^{2,\pm}_{k,l},\quad Lv^{2,\pm}_{k,l} = \alpha^\pm_{k,l} v^{2,\pm}_{k,l}+\sigma^\pm_{k,l} v^{1,\pm}_{k,l}
\end{equation}
where
\begin{equation*}
\alpha^\pm_{k,l} = \Re \mu^\pm_{k,l} ,\qquad \sigma^\pm_{k,l}=\Im \mu^\pm_{k,l}.
\end{equation*}
\end{enumerate}
\end{lem}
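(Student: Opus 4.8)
The plan is to exploit the block-diagonal structure of $L$ established just before the lemma: since each $H_{k,l}$ is a four-dimensional invariant subspace, it suffices to diagonalise (or block-diagonalise) the $4\times4$ matrix representing $L|_{H_{k,l}}$ in the ordered basis $\{e^{(1)}_{k,l},e^{(2)}_{k,l},e^{(3)}_{k,l},e^{(4)}_{k,l}\}$, and then assemble the global spectrum as the union over $(k,l)$. First I would compute this matrix explicitly from the formulas \eqref{A and B op} for $A$ and $B$ — but with a crucial bookkeeping step, since \eqref{A and B op} is written in the complex exponential basis $\varphi_{k,l}=e^{i\gamma kx}\sin ly$ while \eqref{ei_kl} uses the real basis $\sin(\gamma kx)\sin ly$, $\cos(\gamma kx)\sin ly$. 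Converting via $\cos=\tfrac12(\varphi_{k,l}+\varphi_{-k,l})$, $\sin=\tfrac1{2i}(\varphi_{k,l}-\varphi_{-k,l})$ and using $\overline{p_{k,l}}=p_{-k,l}$, the operator $R_1$ acts as multiplication by $\lambda_{k,l}^{-1}$, $R_2$ as multiplication by $(2F+\lambda_{k,l})^{-1}$, and $\Delta$ as multiplication by $-\lambda_{k,l}$; the Jacobian terms $J(\cdot,y)=\partial_x(\cdot)$ mix the $\sin$ and $\cos$ components. The upshot should be that $L|_{H_{k,l}}$ has a $2\times2$ block structure: the pair $(e^{(1)},e^{(4)})$ couples to itself and $(e^{(2)},e^{(3)})$ couples to itself, with the two $2\times2$ blocks complex-conjugate (or transpose) to each other. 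Writing $\eta_{k,l}=\tfrac{1}{Re}\lambda_{k,l}+r+i\tfrac{\gamma k\beta}{\lambda_{k,l}}$, the characteristic polynomial of the relevant $2\times2$ block is
\[
\mu^2+(F+\lambda_{k,l})\eta_{k,l}\,\mu+\frac{\lambda_{k,l}\eta_{k,l}^2}{?}-\gamma^2k^2U^2\frac{(2F-\lambda_{k,l})}{?}=0
\]
(the precise coefficients to be read off), whose roots, after clearing the common factor $2F+\lambda_{k,l}$, give exactly the stated $\mu^\pm_{k,l}$ with the discriminant $F^2\eta_{k,l}^2+\gamma^2k^2U^2(4F^2-\lambda_{k,l}^2)$.

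Once the eigenvalues $\mu^\pm_{k,l}$ are in hand, part (1) follows by noting that $\{\mu^\pm_{k,l}\}$ exhausts the point spectrum (each $H_{k,l}$ contributes these two values, each with algebraic multiplicity two coming from the conjugate block), and that $L$ has compact resolvent by the discussion in Section 4, so $\sigma(L)$ is exactly this discrete set. For parts (2) and (3), I would directly verify that the vectors $v^{1,\pm}_{k,l}$ and $v^{2,\pm}_{k,l}$ defined in \eqref{Hc} span a two-dimensional $L$-invariant subspace on which $L$ acts as the real Jordan form $\begin{pmatrix}\alpha^\pm_{k,l} & \sigma^\pm_{k,l}\\ -\sigma^\pm_{k,l} & \alpha^\pm_{k,l}\end{pmatrix}$ with $\alpha^\pm_{k,l}=\Re\mu^\pm_{k,l}$, $\sigma^\pm_{k,l}=\Im\mu^\pm_{k,l}$. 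The natural route: the complex vector $w^\pm_{k,l}:=\gamma kU\,(e^{(1)}_{k,l}+ie^{(2)}_{k,l}) + (\mu^\pm_{k,l}+\eta_{k,l})(e^{(4)}_{k,l}+ie^{(3)}_{k,l})$ (or some such combination dictated by the computation) should be a genuine eigenvector of the complexified block with eigenvalue $\mu^\pm_{k,l}$; taking real and imaginary parts of $w^\pm_{k,l}$ yields $v^{1,\pm}_{k,l}$ and $v^{2,\pm}_{k,l}$, and the eigenvector relation $Lw^\pm=\mu^\pm w^\pm$ splits into precisely the two identities \eqref{LHkl}. Linear independence and the fact that $H_{k,l,+}\oplus H_{k,l,-}=H_{k,l}$ when $\mu^+\neq\mu^-$ follow from the structure of the coefficient $\gamma kU\neq 0$ (the degenerate case $k=0$, where the $e^{(1)},e^{(2)}$ parts vanish and the analysis reduces to the $\theta$-component alone, should be handled as a separate remark, as should the repeated-root locus of the discriminant).

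The main obstacle I anticipate is purely computational rather than conceptual: correctly tracking the real/complex basis change so that the off-diagonal Jacobian couplings land with the right signs, and confirming that the two $2\times2$ blocks genuinely decouple in the way the ansatz \eqref{Hc} presupposes — in particular verifying that $v^{1,\pm}_{k,l}$ built from $e^{(1)},e^{(3)},e^{(4)}$ (no $e^{(2)}$ component in its $\psi$-slot) really is invariant, which is a nontrivial consequence of the specific form of $B$. A secondary subtlety is checking the algebraic identity that the product of the two roots of the reduced quadratic equals the constant term, i.e. that $\mu^+_{k,l}\mu^-_{k,l}$ and $\mu^+_{k,l}+\mu^-_{k,l}$ reproduce the entries computed from the matrix; this is where the factor $4F^2-\lambda_{k,l}^2=(2F-\lambda_{k,l})(2F+\lambda_{k,l})$ must emerge naturally, and it is the algebraic heart of why the threshold $2F>\gamma^2+1$ (equivalently $2F>\lambda_{1,1}$) governs instability. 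Everything else — invariance of $H_{k,l}$, density of the direct sum, compactness of the resolvent — is already available from Section 4 or is immediate.
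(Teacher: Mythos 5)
Your overall route — reduce $L$ on each four\mbox{-}dimensional $H_{k,l}$ to a small matrix, diagonalise it, and recover the real invariant planes by taking real and imaginary parts of a complex eigenvector — is the same as the paper's, and your final ``natural route'' with $w^\pm_{k,l}$ is essentially what the paper does. But the intermediate claim about the real $4\times4$ matrix is wrong, and it would have stalled you if pursued literally. In the real basis $\{e^{(1)}_{k,l},\dots,e^{(4)}_{k,l}\}$ the operator $L$ does \emph{not} decompose into a block on $\mathrm{span}\{e^{(1)},e^{(4)}\}$ and a block on $\mathrm{span}\{e^{(2)},e^{(3)}\}$. From $B$ in \eqref{A and B op}: the $\beta$\mbox{-}term and the $U$\mbox{-}terms all carry an $x$\mbox{-}derivative $J(\cdot,y)=\partial_x$, which sends $\sin(\gamma k x)\mapsto \gamma k\cos(\gamma k x)$ and $\cos(\gamma k x)\mapsto -\gamma k\sin(\gamma k x)$. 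Consequently $Le^{(1)}_{k,l}$ has components along $e^{(1)}_{k,l}$ (from $A$) \emph{and} along $e^{(2)}_{k,l}$, $e^{(4)}_{k,l}$ (from $B$); in particular $\mathrm{span}\{e^{(1)},e^{(4)}\}$ is not $L$\mbox{-}invariant, nor is any pair of the $e^{(i)}$'s.

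What does hold — and what the paper exploits — is that the complexification of $H_{k,l}$ splits into two $L$\mbox{-}invariant complex planes spanned by
$w^{(1)}_{k,l}=e^{(2)}_{k,l}-ie^{(1)}_{k,l}$, $w^{(2)}_{k,l}=e^{(4)}_{k,l}-ie^{(3)}_{k,l}$
and their conjugates. On $\mathrm{span}_{\mathbb C}\{w^{(1)}_{k,l},w^{(2)}_{k,l}\}$, $L$ is represented by the complex $2\times2$ matrix $M_{k,l}$ of \eqref{Mkl}, whose characteristic polynomial yields $\mu^\pm_{k,l}$ after clearing $2F+\lambda_{k,l}$; the conjugate plane gives $\overline{\mu^\pm_{k,l}}$. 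The real invariant planes $H_{k,l,\pm}$ are then spans of real and imaginary parts of $w^\pm_{k,l}=i\gamma k U\,w^{(1)}_{k,l}+(\mu^\pm_{k,l}+\eta_{k,l})\,w^{(2)}_{k,l}$, which gives exactly \eqref{Hc} (note the sign: the paper uses $e^{(4)}-ie^{(3)}$, not $e^{(4)}+ie^{(3)}$, so your proposed $w^\pm$ has the wrong sign on the $e^{(3)}$ term; this would flip the sign of the $e^{(3)}$ coefficient in $v^{1,\pm}$). Two smaller cautions: the claim ``each eigenvalue has algebraic multiplicity two'' is not right — generically $\mu^+_{k,l},\mu^-_{k,l},\overline{\mu^+_{k,l}},\overline{\mu^-_{k,l}}$ are four distinct simple eigenvalues of $L|_{H_{k,l}}$; and your worry that $v^{1,\pm}_{k,l}$ alone ``has no $e^{(2)}$ component, so is it invariant?'' is misplaced, since only the two\mbox{-}dimensional span $H_{k,l,\pm}$ is claimed to be invariant, not the single vector.
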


\begin{proof}
Before proceeding further, it is convenient to introduce the complex valued functions
\begin{equation}\label{wi_kl}
\begin{aligned}
&w^{(1)}_{k,l}(x,y)=(e^{-i\gamma kx}\sin ly, 0)=e^{(2)}_{k,l}(x,y)-i e^{(1)}_{k,l}(x,y),\\
&w^{(2)}_{k,l}(x,y)=(0, e^{-i\gamma kx}\sin ly)=e^{(4)}_{k,l}(x,y)-i e^{(3)}_{k,l}(x,y).
\end{aligned}
\end{equation}

These functions form a basis for the complexification of $H_{k,l}$, and the action of $L$ in these coordinates takes the simple form
\begin{equation}\label{Lw}
\begin{aligned}
Lw^{(1)}_{k,l}&=-\eta_{k,l}w^{(1)}_{k,l}-i\gamma kU\frac{2F-\lambda_{k,l}}{2F+\lambda_{k,l}}w^{(2)}_{k,l}, \\
Lw^{(2)}_{k,l}&=-\frac{\lambda_{k,l}}{2F+\lambda_{k,l}}\eta_{k,l} w^{(2)}_{k,l}+ i\gamma kU w^{(1)}_{k,l}. 
\end{aligned}
\end{equation}

We then extract from \eqref{Lw} the coefficient matrix
\begin{equation}\label{Mkl}
M_{k,l} = \left(\begin{matrix}
-\eta_{k,l} & -i\gamma\frac{2F-\lambda_{k,l}}{2F+\lambda_{k,l}}kU \\
i\gamma kU & -\frac{\lambda_{k,l}}{2F+\lambda_{k,l}}\eta_{k,l}
\end{matrix}\right).
\end{equation}

It is easy to see that if $V\in \C^2$ is a left eigenvector of $M_{k,l}$ then $w=w^{(1)}_{k,l} V_1+w^{(2)}_{k,l} V_2$ is an eigenfunction of $L$ with the same eigenvalue.

Now, the left eigenvectors of $M_{k,l}$ are
\begin{equation}\label{lefteigenvec}
V_{k,l}^\pm = \left( i\gamma kU , \mu^\pm_{k,l}+\eta_{k,l} \right),
\end{equation}
where the corresponding eigenvalues $\mu^\pm_{k,l}$ are given by \eqref{mupm}.

The claim about $\sigma(L)$ follows from the above, and we also see that  
$$w^{\pm}_{k,l}= i\gamma kU w^{(1)}_{k,l}+(\mu^\pm_{k,l}+\eta_{k,l})w^{(2)}_{k,l}$$ 
are eigenfunctions of $L$, in the complexification of $H$, with corresponding eigenvalues $\mu^{\pm}_{k,l}$.

The spaces $H_{k,l,\pm}$ are then constructed by taking the span of the real and imaginary parts of $w^\pm_{k,l}$. These functions can be easily seen to take the form given in \eqref{Hkl+-}, and the rest of the statements follows now directly  from the definition of $w^\pm_{k,l}$.
\end{proof}

Having a complete quantitative description of the spectrum of $L$, we now turn to the problem of describing how linear instability can occur in terms of the parameters Froude number and $F$ and the shear velocity $U$ of the basic flow.

\begin{lem}\label{PES}
Let
\begin{equation}\label{Z0}
\mathcal{Z}_0 = \left\lbrace (k,l)\in \N^2:\: U_c^2 = \mathcal{U}^2(k,l) \right\rbrace.
\end{equation}
Then the following principle of exchange of stability (PES) holds true 
in terms of $F$ and $U$:

\begin{enumerate}
\item If $2F\leq \gamma^2+1$, then $\Re \mu^\pm_{k,l}<0$ for all $k\geq 0$ and $l\geq 1$,

\item If $2F>\gamma^2+1$, then there exists a neighborhood $\mathcal{N}_c$ of $U_c$ such that, for all $U\in\mathcal{N}_c$, we have
\begin{equation}\label{PES_eq}
\begin{aligned}
&\Re \mu^+_{k,l} \begin{cases}
<0 \quad \text{if } U<U_c, \\
=0 \quad \text{if } U=U_c, \\
>0 \quad \text{if } U>U_c, \\
\end{cases}  &&  \forall (k,l)\in \mathcal{Z}_0,\\
&\Re \mu^-_{k,l}<0 && \forall (k,l)\in \mathcal{Z}_0,\\
&\Re \mu^\pm_{k,l} < 0&& \forall (k,l)\not\in \mathcal{Z}_0.
\end{aligned}
\end{equation}

\end{enumerate}

\end{lem}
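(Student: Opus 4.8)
The plan is to read everything off the explicit spectral formula \eqref{mupm}, reducing the stability of each mode $(k,l)$ to a single scalar inequality in $U$. First I would dispose of the branch labelled $\mu^-$ once and for all: from \eqref{mupm}, $\mu^+_{k,l}+\mu^-_{k,l}=-2(F+\lambda_{k,l})\eta_{k,l}/(2F+\lambda_{k,l})$, so
\[
\Re\bigl(\mu^+_{k,l}+\mu^-_{k,l}\bigr)=-\frac{2(F+\lambda_{k,l})\bigl(\tfrac1{Re}\lambda_{k,l}+r\bigr)}{2F+\lambda_{k,l}}<0
\]
for every $(k,l)$ and every $U$. Interpreting the square root in \eqref{mupm} as the principal branch (so that $\Re\mu^+_{k,l}\ge\Re\mu^-_{k,l}$), this forces $\Re\mu^-_{k,l}\le\tfrac12\Re(\mu^+_{k,l}+\mu^-_{k,l})<0$ unconditionally. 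Hence the entire lemma, including the line $\Re\mu^-_{k,l}<0$ for $(k,l)\in\mathcal{Z}_0$, reduces to tracking the sign of $\Re\mu^+_{k,l}$.

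Next I would carry out the algebraic heart of the argument. Fix $(k,l)$, abbreviate $\lambda=\lambda_{k,l}$, write $\eta_{k,l}=a+ib$ with $a=\tfrac1{Re}\lambda+r>0$ and $b=\gamma k\beta/\lambda$, and set $Z=F^2\eta_{k,l}^2+\gamma^2k^2U^2(4F^2-\lambda^2)$. Since $(F+\lambda)a>0$ and $\Re\sqrt Z=\sqrt{(|Z|+\Re Z)/2}\ge 0$ for the principal branch, one has
\[
\Re\mu^+_{k,l}<0\ \Longleftrightarrow\ \frac{|Z|+\Re Z}{2}<(F+\lambda)^2a^2 .
\]
I would then square this out, using $|Z|^2=(\Re Z)^2+(\Im Z)^2$, $\Im Z=2F^2ab$, and $4F^2-\lambda^2=(2F-\lambda)(2F+\lambda)$, and simplify. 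The one genuinely delicate point — and the main obstacle — is the squaring step: the inequality $|Z|<2(F+\lambda)^2a^2-\Re Z$ can only be squared when the right-hand side is nonnegative, so the regime $2(F+\lambda)^2a^2-\Re Z\le 0$ (where $\Re\mu^+_{k,l}>0$ outright) must be handled separately and matched by hand against the reduced inequality. Once the dust settles, the criterion becomes
\[
\gamma^2k^2(2F-\lambda_{k,l})\,U^2\ <\ \lambda_{k,l}\Bigl(\tfrac1{Re}\lambda_{k,l}+r\Bigr)^2+\frac{F^2\gamma^2k^2\beta^2}{\lambda_{k,l}(F+\lambda_{k,l})^2},
\]
with the reversed strict (resp. equality) version characterizing $\Re\mu^+_{k,l}>0$ (resp. $=0$). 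For $k=0$ the left side vanishes and the right side is positive, so $\Re\mu^+_{0,l}<0$ always; for $k\ge1$ with $\lambda_{k,l}\ge 2F$ the left side is $\le 0$, so again $\Re\mu^+_{k,l}<0$ always; and for $k\ge1$ with $\lambda_{k,l}<2F$ one divides by $\gamma^2k^2(2F-\lambda_{k,l})>0$ to obtain exactly $U^2<\mathcal{U}^2(k,l)$ with $\mathcal{U}^2$ as in \eqref{u-defined}.

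Finally I would assemble the two assertions. If $2F\le\gamma^2+1$, then $\lambda_{k,l}\ge\gamma^2+1\ge 2F$ for all $k,l\ge1$ and $k=0$ modes are always stable, so $\Re\mu^\pm_{k,l}<0$ for every mode and every $U$, which is assertion (1). If $2F>\gamma^2+1$, only finitely many pairs satisfy $\lambda_{k,l}<2F$, and for exactly those the stability threshold is $\mathcal{U}(k,l)$; thus $\mathcal{Z}_0$ is the finite, nonempty set of minimizers of $\mathcal{U}^2$, and Step 2 gives $\Re\mu^+_{k,l}<0$, $=0$, $>0$ according as $U<U_c$, $U=U_c$, $U>U_c$ for $(k,l)\in\mathcal{Z}_0$. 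For $(k,l)\notin\mathcal{Z}_0$, the modes with $k=0$ or $\lambda_{k,l}\ge 2F$ are stable for all $U$, while the remaining finitely many pairs have $\mathcal{U}^2(k,l)>U_c^2$; setting $\delta=\min\{\mathcal{U}^2(k,l):(k,l)\notin\mathcal{Z}_0,\ \lambda_{k,l}<2F\}-U_c^2>0$ and choosing $\mathcal{N}_c=(U_c-\varepsilon,U_c+\varepsilon)$ with $(U_c+\varepsilon)^2<U_c^2+\delta$ yields $\Re\mu^+_{k,l}<0$ for all such $(k,l)$ and all $U\in\mathcal{N}_c$. Together with $\Re\mu^-_{k,l}<0$ from Step 1, this is precisely \eqref{PES_eq}. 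Aside from the squaring subtlety flagged above, the remaining work is bookkeeping with finite index sets.
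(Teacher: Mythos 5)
Your proposal is correct, and it reaches the conclusion by a genuinely different route from the paper's. Both arguments start the same way — dispatch the $\mu^-$ branch via $\Re\sqrt{Z}\ge 0$ and the sum formula — but the core step diverges. The paper never extracts $\Re\mu^+_{k,l}$ explicitly: it substitutes $\mu^+_{k,l}=i\sigma$ into the characteristic polynomial of $M_{k,l}$ from \eqref{Mkl}, splits into real and imaginary parts to solve for $\sigma$ and $\bar U^2$ (which produces \eqref{U2} and the condition $2F>\lambda_{k,l}$), and then proves monotonicity of $U\mapsto\Re\mu^+_{k,l}$ via the explicit formula for $\partial_U\Re\mu^+_{k,l}$ together with $\Re\bigl[(2F+\lambda_{k,l})\mu^+_{k,l}(0)+(F+\lambda_{k,l})\eta_{k,l}\bigr]=F\Re\eta_{k,l}>0$. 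This sidesteps the branch-cut algebra entirely. You, by contrast, compute $\Re\mu^+_{k,l}$ directly through $\Re\sqrt Z=\sqrt{(|Z|+\Re Z)/2}$ and collapse the stability condition to a single quadratic inequality in $U^2$; this yields the full sign dichotomy $\Re\mu^+_{k,l}\lessgtr 0 \Leftrightarrow U^2\lessgtr\mathcal{U}^2(k,l)$ in one stroke, which is somewhat stronger than the neighborhood statement the lemma asserts, at the price of the squaring subtlety you flagged.

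On that subtlety: it does close cleanly, and it is worth recording the check since you only sketched it. If the reduced inequality holds, then
\[
\Re Z \;=\; F^2(a^2-b^2)+\gamma^2k^2U^2(2F-\lambda_{k,l})(2F+\lambda_{k,l})
\;<\;(F+\lambda_{k,l})^2a^2-\frac{F^4b^2}{(F+\lambda_{k,l})^2}\;\le\;(F+\lambda_{k,l})^2a^2,
\]
using $(2F+\lambda_{k,l})\lambda_{k,l}-(F+\lambda_{k,l})^2=-F^2$; in particular $2(F+\lambda_{k,l})^2a^2-\Re Z>0$, so you are always in the regime where squaring is legal whenever the reduced inequality is in force, and the degenerate regime can only arise together with strict failure of the reduced inequality. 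With that filled in, your derivation of the threshold, the treatment of $k=0$ and of $\lambda_{k,l}\ge 2F$, and the finite-buffer argument for $(k,l)\notin\mathcal Z_0$ are all sound and deliver exactly \eqref{PES_eq}.
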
 

\begin{proof}

First we note that $\Re \mu^-_{k,l} \leq -\frac{(F+\lambda_{k,l})\Re\eta_{k,l}}{2F+\lambda_{k,l}} <0$ for all $k,l$, so this part of the spectrum of $L$ is always stable, irrespective of the value of $U$.

Next we fix an integer pair $(k,l)$, and assume that there exists a $\bar{U}\in \R$ for which $\Re\mu^+_{k,l}(\bar{U})= 0$. We claim that this is only possible if $2F>\lambda_{k,l}$. Note that this already proves the lemma in the case $2F\leq \gamma^2+1$.

To show the claim, let $\mu^+_{k,l}$ be of the form $\mu^+_{k,l}=i\sigma$ for some $\sigma\in\R$. Since the characteristic polynomial of $M_{k,l}$ is given by
\begin{equation*}
z^2 + \frac{2(F+\lambda_{k,l})\eta_{k,l}}{2F+\lambda_{k,l}}z + \frac{\lambda_{k,l}\eta_{k,l}^2-\gamma^2k^2U^2(2F-\lambda_{k,l})}{2F+\lambda_{k,l}}, 
\end{equation*}
we see that $\sigma$ and $\bar{U}$ must satisfy
\begin{align*}
& (2F+\lambda_{k,l})\sigma^2 +2(F+\lambda_{k,l})\Im \eta_{k,l}\sigma -\lambda_{k,l}\Re(\eta_{k,l}^2)+\gamma^2k^2\bar{U}^2(2F-\lambda_{k,l})=0,  \\
& (F+\lambda_{k,l})\sigma+ \lambda_{k,l}\Im\eta_{k,l}=0.
\end{align*}

It is easy to see that the above system has a solution if and only if $2F>\lambda_{k,l}$, whence the claim follows, and, when that is the case, the values of $\bar{U}$ and $\sigma$ are uniquely determined by $(k,l)$ as given below:
\begin{equation}\label{U2}
\begin{aligned}
&\bar{U}^2 = \frac{1}{2F-\lambda_{k,l}} \left( \frac{F^2\beta^2}{\lambda_{k,l}(F+\lambda_{k,l})^2}+\frac{\lambda_{k,l}(\frac{1}{Re}\lambda_{k,l}+r)^2}{\gamma^2k^2} \right), \\
&\sigma = -\frac{\gamma k\beta}{F+\lambda_{k,l}}.
\end{aligned}
\end{equation}

Suppose now that $2F>\lambda_{k,l}$ for a fixed pair $(k,l)$. Then we have
\begin{equation*}
\Re[(2F+\lambda_{k,l})\mu^+(0)+(F+\lambda_{k,l})\eta_{k,l}] = F\Re \eta_{k,l}>0,
\end{equation*}
which, together with the identity
\begin{equation*}
\frac{\partial}{\partial U} \Re \mu^+_{k,l}= \frac{\gamma^2 k^2 (2F-\lambda_{k,l})U\Re[(2F+\lambda_{k,l})\mu^+_{k,l}+(F+\lambda_{k,l})\eta_{k,l}]}{|(2F+\lambda_{k,l})\mu^+_{k,l}+(F+\lambda_{k,l})\eta_{k,l}|^2},
\end{equation*}
shows that the map $U\mapsto \Re \mu^+_{k,l}(U)$ is strictly increasing for $U>0$.

The rest of the lemma follows immediately from the definition of $U_c^2$, the monotonicity of $\Re\mu^+_{k,l}$ with respect to $U$, and the fact that $\Re\mu^+_{k,l}|_{U=0}<0$ for all $k,l$.

\end{proof}

\section{Proof of Theorem~\ref{transition}}

Theorem \ref{transition} follows essentially from the theory of dynamic transitions from simple complex eigenvalues laid out in \cite{ptd}. To make use of these ideas we need the following result, which provides a precise description of the local behavior of the system near the equilibrium.

\begin{thm}\label{reduced}
Assume $\mathcal{Z}_0=\{(\hat{k},\hat{l})\}$, and let $H_c=H_{\hat{k},\hat{l},+}$ be as defined in \eqref{Hkl+-}. Then \eqref{abst} has a local invariant manifold of the form 
$$\mathcal{M} = \{ x+h(x):\: x\in H_c,\, |x|\leq r_0\},$$
where $r_0>0$  is independent of $U$ and
\begin{equation}\label{mnfld}
\begin{aligned}
& h(x)= -a(x_1^2+x_2^2)e^{(4)}_{0,2\hat{l}}+ O(|x|^3),\\
&  x=x_1 v^{1,+}_{\hat{k},\hat{l}}+x_2 v^{2,+}_{\hat{k},\hat{l}},\\
& a=\frac{\gamma^2 \hat{k}^2 \hat{l} U F\Re(\mu^+_{\hat{k},\hat{l}}+\eta_{\hat{k},\hat{l}})}{(2F+\lambda_{0,2\hat{l}})(2\alpha^+_{\hat{k},\hat{l}}-\alpha^+_{0,2\hat{l}})}.
\end{aligned}
\end{equation}
 
Furthermore, the flow induced by \eqref{abst} on $\mathcal{M}$ is governed by the equations
\begin{equation}\label{red_eq}
\begin{aligned}
\frac{dx_1}{dt}= \alpha x_1-\sigma x_2-A(x_1^2+x_2^2)(\cos\phi\, x_1-\sin\phi\,  x_2)+O(|x|^4),\\
\frac{dx_2}{dt}= \alpha x_2+\sigma x_1-A(x_1^2+x_2^2)(\sin\phi\, x_1+\cos\phi\, x_2)+O(|x|^4),
\end{aligned}
\end{equation}
where  
\begin{equation}
\label{A const}
\begin{aligned}
& \alpha=\alpha^+_{\hat{k},\hat{l}}, \qquad \sigma=-\sigma^+_{\hat{k},\qquad  \hat{l}}, e^{i\phi}=\frac{\mu^+_{\hat{k},\hat{l}}-\mu^-_{\hat{k},\hat{l}}}{|\mu^+_{\hat{k},\hat{l}}-\mu^-_{\hat{k},\hat{l}}|}, \\
& 
A=\frac{\gamma^4 \hat{k}^4\hat{l}^2 U^2 F \Re(\mu^+_{\hat{k},\hat{l}}+\eta_{\hat{k},\hat{l}})[(2F-\lambda_{\hat{k},\hat{l}})(2\lambda_{\hat{k},\hat{l}}-\lambda_{0,2\hat{l}})+\lambda_{0,2\hat{l}}\lambda_{\hat{k},\hat{l}}]}{\lambda_{\hat{k},\hat{l}}(2F+\lambda_{\hat{k},\hat{l}})(2F+\lambda_{0,2\hat{l}})(2\alpha^+_{\hat{k},\hat{l}}-\alpha^+_{0,2\hat{l}})|\mu^+_{\hat{k},\hat{l}}-\mu^-_{\hat{k},\hat{l}}|}.
\end{aligned}
\end{equation}

\end{thm}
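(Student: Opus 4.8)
The plan is to obtain $\mathcal{M}$ from the center--manifold theorem for sectorial operators used in \cite{ptd}, and then to compute its quadratic part and the cubic reduced vector field by exploiting that all the relevant mode interactions stay inside the finite--dimensional $L$--invariant blocks $H_{\hat k,\hat l}$ and $H_{0,2\hat l}$ of \eqref{Hkl}. For existence: by the explicit description of $\sigma(L)$ in Section 5 the operator $L$ has compact resolvent, and by \eqref{Uc-attained} together with Lemma \ref{PES} its spectrum splits, for $U$ in a neighbourhood $\mathcal{N}_c$ of $U_c$, as $\sigma(L)=\{\mu^+_{\hat k,\hat l},\overline{\mu^+_{\hat k,\hat l}}\}\cup\sigma_s$, where the first (conjugate, algebraically simple) pair has real part $\alpha^+_{\hat k,\hat l}=O(|U-U_c|)$ and $\Re\sigma_s\le-\delta_0<0$ uniformly in $\mathcal{N}_c$; the associated spectral projection has range $H_c=H_{\hat k,\hat l,+}$, which is two--dimensional and $L$--invariant by \eqref{Hkl+-}--\eqref{LHkl}. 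Since $G$ is bilinear and bounded from $V\times V$ into $H$ (the estimate used in the proof of Theorem \ref{existence}), it is a smooth map, and the standard center--manifold theorem for semilinear parabolic equations provides a locally invariant $C^k$ manifold $\mathcal{M}=\{x+h(x):x\in H_c,\ |x|\le r_0\}$, tangent to $H_c$ at the origin, with $h$ valued in $H_s=(I-P_c)H$ and $h(x)=O(|x|^2)$; the radius $r_0$ is uniform in $\mathcal{N}_c$ because both the spectral gap and the (parameter--independent) nonlinearity are.

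For the quadratic part I write $x=x_1v^{1,+}_{\hat k,\hat l}+x_2v^{2,+}_{\hat k,\hat l}$ and let $p,q$ be its $\psi$-- and $\theta$--components; by \eqref{Hc} each is an eigenfunction of $-\Delta$ with eigenvalue $\lambda_{\hat k,\hat l}$. Substituting $\Delta p=-\lambda_{\hat k,\hat l}p$, $\Delta q=-\lambda_{\hat k,\hat l}q$ into the definition of $G$ and using $J(p,p)=J(q,q)=0$ collapses the first component and leaves $G(x,x)=(0,-2F\,R_2J(p,q))$; a short trigonometric computation, in which all $x$--dependence cancels through $\cos\Phi\cos(\Phi-\chi)+\sin\Phi\sin(\Phi-\chi)=\cos\chi$ with $\Phi=\gamma\hat kx$ and $\chi=\arg(\mu^+_{\hat k,\hat l}+\eta_{\hat k,\hat l})$, gives $J(p,q)=\tfrac12\gamma^2\hat k^2\hat l\,U\,\Re(\mu^+_{\hat k,\hat l}+\eta_{\hat k,\hat l})\,(x_1^2+x_2^2)\sin2\hat ly$. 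Thus $G(x,x)$ lies entirely in the invariant block $H_{0,2\hat l}$, is a scalar multiple of $e^{(4)}_{0,2\hat l}$, and is orthogonal to $H_c$, so the reduced flow has no quadratic term. Projecting \eqref{abst} by $(I-P_c)$ and collecting quadratic terms gives the homological equation $Dh_2(x)L_cx-Lh_2(x)=G(x,x)$; trying $h_2(x)=-a(x_1^2+x_2^2)e^{(4)}_{0,2\hat l}$ and using $Le^{(4)}_{0,2\hat l}=\alpha^+_{0,2\hat l}e^{(4)}_{0,2\hat l}$ (read off from \eqref{A and B op} at $k=0$) together with $\tfrac{d}{dt}(x_1^2+x_2^2)=2\alpha^+_{\hat k,\hat l}(x_1^2+x_2^2)+O(|x|^3)$ reduces this to the scalar relation $-a(2\alpha^+_{\hat k,\hat l}-\alpha^+_{0,2\hat l})=-F\gamma^2\hat k^2\hat l\,U\,\Re(\mu^+_{\hat k,\hat l}+\eta_{\hat k,\hat l})/(2F+\lambda_{0,2\hat l})$, which is solvable (no resonance: $\alpha^+_{0,2\hat l}<0$, so $2\alpha^+_{\hat k,\hat l}-\alpha^+_{0,2\hat l}\ne0$ in $\mathcal{N}_c$) and delivers precisely the constant $a$ in \eqref{mnfld}.

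For the reduced equation I restrict \eqref{abst} to $\mathcal{M}$ and use $G(x,x)\perp H_c$, $h-h_2=O(|x|^3)$, $G(h,h)=O(|x|^4)$ to get $\dot x=L_cx+P_c[G(x,h_2(x))+G(h_2(x),x)]+O(|x|^4)$. With $h_2(x)=(0,h_\theta)$, $h_\theta=-a(x_1^2+x_2^2)\sin2\hat ly$, both $G(x,h_2)$ and $G(h_2,x)$ reduce --- again via $\Delta$ acting on eigenmodes --- to $R_1$ or $R_2$ applied to a Jacobian of the $(\hat k,\hat l)$--mode against $\sin2\hat ly$, whose $(\hat k,\hat l)$--harmonic is a cubic vector in $\text{span}\{e^{(1)}_{\hat k,\hat l},\dots,e^{(4)}_{\hat k,\hat l}\}$ (the $\sin3\hat ly$ part lands in $H_s$ and affects only higher order). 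Passing to the complex basis $w^{(1)}_{\hat k,\hat l},w^{(2)}_{\hat k,\hat l}$ of \eqref{wi_kl}, applying the oblique spectral projection $P_c$ onto $H_{\hat k,\hat l,+}$ along $H_{\hat k,\hat l,-}$ --- whose dual--eigenvector normalisation produces the factor $|\mu^+_{\hat k,\hat l}-\mu^-_{\hat k,\hat l}|^{-1}$ and the rotation $e^{i\phi}=(\mu^+_{\hat k,\hat l}-\mu^-_{\hat k,\hat l})/|\mu^+_{\hat k,\hat l}-\mu^-_{\hat k,\hat l}|$ --- and returning to $(x_1,x_2)$ then yields \eqref{red_eq}. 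Collecting the operator factors $\lambda_{\hat k,\hat l}^{-1}$ and $(2F+\lambda_{\hat k,\hat l})^{-1}$ (from $R_1$, $R_2$ on the $(\hat k,\hat l)$--mode), the factor $(2F+\lambda_{0,2\hat l})^{-1}$ already carried by $a$, and the combination $(2F-\lambda_{\hat k,\hat l})(2\lambda_{\hat k,\hat l}-\lambda_{0,2\hat l})+\lambda_{0,2\hat l}\lambda_{\hat k,\hat l}$ produced by adding the contributions of the two Jacobians in both components, one recovers exactly $\alpha=\alpha^+_{\hat k,\hat l}$, $\sigma=-\sigma^+_{\hat k,\hat l}$ and the constant $A$ in \eqref{A const}; one notes that this bracket equals $2b$ with $b$ as in \eqref{cont_condition}, which is the point of contact with the main theorem.

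The hardest part is this last step. One must keep track of which of the four Jacobians $J(p,\sin2\hat ly),J(q,\sin2\hat ly),J(\sin2\hat ly,p),J(\sin2\hat ly,q)$ contributes to the $(\hat k,\hat l)$-- rather than the $(\hat k,3\hat l)$--harmonic, assemble the correct combination of $\lambda_{\hat k,\hat l},\lambda_{0,2\hat l}$ and $F$ arising from the operators $R_1,R_2$ and from the $-\Delta\pm2F$ prefactors in $G$, convert the four real modes $e^{(j)}_{\hat k,\hat l}$ into the skew pair $v^{1,+}_{\hat k,\hat l},v^{2,+}_{\hat k,\hat l}$, and --- the genuinely delicate point --- carry out correctly the oblique projection that accounts for the non--orthogonality of the $\mu^\pm_{\hat k,\hat l}$ eigenspaces. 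The remaining items (existence of $\mathcal{M}$, the computation of $h_2$) are, respectively, a citation and a short calculation.
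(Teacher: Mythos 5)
Your proposal matches the paper's proof in structure and substance: existence via the center--manifold theorem of \cite{ptd}, solving the homological equation for the quadratic coefficient $a$ (using $Le^{(4)}_{0,2\hat l}=\alpha^+_{0,2\hat l}e^{(4)}_{0,2\hat l}$ and $\tfrac{d}{dt}(x_1^2+x_2^2)=2\alpha^+_{\hat k,\hat l}(x_1^2+x_2^2)+O(|x|^3)$), and then assembling the cubic reduced vector field from the $H_{\hat k,\hat l}\times H_{0,2\hat l}$ interactions followed by the oblique projection onto $H_{\hat k,\hat l,+}$, which is exactly where the factors $|\mu^+_{\hat k,\hat l}-\mu^-_{\hat k,\hat l}|^{-1}$ and $e^{i\phi}$ enter in the paper's \eqref{Pce4e3}. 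Your shortcut $G(x,x)=(0,-2F\,R_2J(p,q))$, obtained by using that both components of $x$ are $-\Delta$--eigenfunctions, reproduces the paper's tabulated \eqref{BHkl}--\eqref{BHc} and gives the same value of $a$, and you correctly identify the bracket $(2F-\lambda_{\hat k,\hat l})(2\lambda_{\hat k,\hat l}-\lambda_{0,2\hat l})+\lambda_{0,2\hat l}\lambda_{\hat k,\hat l}=2b$ and all remaining constituent factors of $A$, so this is the same proof with the long cubic bookkeeping compressed rather than a different route.
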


\begin{rem}
The invariant manifold given above is known to be asymptotically complete; see \cite{ptd}, so the behavior of the full system \eqref{main1} near the origin is the same as that of \eqref{red_eq}. Theorem \ref{transition} is thus a consequence of Theorem \ref{reduced}, as we show below. 
\end{rem}

\begin{proof}[Proof of Theorem \ref{transition}]
Note that $A\cos\phi >0$ if and only if \eqref{cont_condition} is satisfied, and in that case the origin is a stable equilibrium of \eqref{red_eq} when $\alpha=0$. This fact already shows that the transition is continuous, see \cite{ptd}. Hence it only remains to obtain the approximation formulas \eqref{approx_soln}.

By changing to polar coordinates, and assuming \eqref{cont_condition} is satisfied, it is easy to see that the system \eqref{red_eq} has a periodic solution given by
\begin{equation}\label{x1 x2}
x_1(t) = \rho\cos(\omega t),\quad x_2(t)= \rho\sin(\omega t),
\end{equation}
where
\begin{equation}\label{rho omega}
\rho^2 = \frac{\alpha}{A\cos\phi} +O\left(\alpha^{{3}/{2}}\right),\qquad \omega = \sigma-\alpha\tan\phi + O\left(\alpha^{{3}/{2}}\right).
\end{equation}

The corresponding solution to \eqref{abst} is thus given by
\begin{equation*}
\begin{aligned}
u(t)&= \rho \cos (\omega t) v^{1,+}_{k,l}+\rho \sin(\omega t) v^{2,+}_{k,l} +h\left(\rho \cos (\omega t) v^{1,+}_{k,l}+\rho \sin(\omega t) v^{2,+}_{k,l} \right) \\
&= \left(\frac{\alpha}{A\cos \phi}\right)^{{1}/{2}} \left(\cos(\omega t)v^{1,+}_{k,l}+\sin(\omega t) v^{2,+}_{k,l}\right)+O\left(\alpha \right).
\end{aligned}
\end{equation*}
Here we have dropped the hats over $k$  and $l$ to simplify the exposition.

Using \eqref{Hc}, \eqref{lefteigenvec} and \eqref{wi_kl} we find that
\begin{equation*}
\begin{aligned}
\cos(\omega t)v^{1,+}_{k,l}+\sin(\omega t) v^{2,+}_{k,l} &= \Re \left[ e^{-i\omega t} w^+_{k,l} \right] \\
&= \Re\left[ e^{-i\omega t} \left( i\gamma k U w^{(1)}_{k,l} +(\mu^+_{k,l}+\eta_{k,l}) w^{(2)}_{k,l} \right)\right].
\end{aligned}
\end{equation*}
Then by definition,  we have 
\begin{equation*}
e^{-i\omega t} \left( i\gamma k U w^{(1)}_{k,l} +(\mu^+_{k,l}+\eta_{k,l}) w^{(2)}_{k,l} \right)
= \begin{pmatrix} i\gamma k U e^{-i(\omega t +\gamma k x)} \sin (ly) \\
(\mu^+_{k,l}+\eta_{k,l})e^{-i(\omega t +\gamma k x)} \sin (ly) \end{pmatrix}.
\end{equation*}
Consequently, 
\begin{equation*}
\cos(\omega t)v^{1,+}_{k,l}+\sin(\omega t) v^{2,+}_{k,l}  
= \begin{pmatrix} \gamma k U \sin(\omega t +\gamma k x) \sin (ly) \\
\Re [(\mu^+_{k,l}+\eta_{k,l})e^{-i(\omega t +\gamma k x)}] \sin (ly) \end{pmatrix},
\end{equation*}
with
\begin{equation*}
\begin{aligned}
\Re [(\mu^+_{k,l}+\eta_{k,l})e^{-i(\omega t +\gamma k x)}] =& \left(\frac{1}{Re}\lambda_{k,l}+r\right)\cos(\omega t + \gamma k x)\\
& +\dfrac{\gamma k \beta F}{\lambda_{k,l} (F+\lambda_{k,l})}\sin(\omega t + \gamma k x) + O(\alpha).
\end{aligned}
\end{equation*}

Thus we obtain \eqref{approx_soln} by making use of the above calculations and \eqref{rho omega}.

To show \eqref{rho omega thm} we note first that at $U=U_c$, one has
\begin{align*}
& \frac{d\mu^+_{k,l}}{dU}=\frac{\gamma^2k^2(2F-\lambda_{k,l}) (F+\lambda_{k,l})U_c}{(F+\lambda_{k,l})^4(\Re\eta_{k,l})^2+F^4(\Im\eta_{k,l})^2}\left( (F+\lambda_{k,l})^2\Re\eta_{k,l}-iF^2\Im\eta_{k,l}\right), \\
& \mu^+_{k,l}-\mu^-_{k,l} = \frac{2}{(2F+\lambda_{k,l})(F+\lambda_{k,l})} \left((F+\lambda_{k,l})^2\Re\eta_{k,l}+iF^2\Im\eta_{k,l}\right),
\end{align*}
so that, upon substitution in \eqref{A const}, we obtain
\begin{equation*}
\frac{1}{A\cos\phi} =\frac{\lambda_{k,l}\lambda_{0,2l}\Re\eta_{0,2l}(2F+\lambda_{k,l})|\mu^+_{k,l}-\mu^-_{k,l}|^2}{\gamma^4 k^4l^2 U^2 F \Re\eta_{k,l}[(2F-\lambda_{k,l})(2\lambda_{k,l}-\lambda_{0,2l})+\lambda_{0,2l}\lambda_{k,l}]\Re(\mu^+_{k,l}-\mu^-_{k,l})}.
\end{equation*}

Next we approximate $\alpha= \frac{d\mu^+_{k,l}}{dU}\Big|_{U=U_c}(U-U_c)+O(|U-U_c|^2)$ to get 
\begin{equation*}
\frac{\alpha}{A\cos\phi} =\frac{4\lambda_{k,l}(2F-\lambda_{k,l})\Re\eta_{0,2l}(U-U_c)}{\gamma^2 k^2U_c F [(2F-\lambda_{k,l})(2\lambda_{k,l}-\lambda_{0,2l})+\lambda_{0,2l}\lambda_{k,l}]\Re\eta_{k,l}}+O(|U-U_c|^2).
\end{equation*}
which proves the formula for $\rho$ in \eqref{rho omega thm} after some straightforward simplifications. Note that the $O(|U-U_c|^2)$  are absorbed into the $O(|U-U_c|)$ terms in 
\eqref{approx_soln}.

Using the same arguments we find that
\[\frac{d}{dU}\Big|_{U=U_c} (\sigma-\alpha\tan \phi ) = -\frac{d}{dU}\Big|_{U=U_c}\Im\mu^+_{k,l} - \tan\phi \frac{d}{dt}\Big|_{U=U_c}\Re\mu^+_{k,l} =0,
\]
which shows that the $O(\alpha)=O(|U-U_c|)$ terms in the formula for $\omega$ vanish identically, as desired.
\end{proof}

The proof of Theorem \ref{reduced} relies on the explicit form of the eigenfunctions of $L$ and their non-linear interactions. In what follows we describe in detail the process needed to approximate the invariant manifold function.

\begin{proof}[Proof of Theorem \ref{reduced}]

The existence of a local invariant manifold, tangent to the center-unstable subspace $H_c$ at the origin, is known to hold by, e.g, \cite{ptd}.

Hence we proceed directly to the approximation of the invariant manifold function noting that, although the following is a purely formal approach, the formulas we obtain do have a rigorous justification.

Using the ansatz $u(t)= u_c(t)+h(u_c(t))$, where $u_c=P_c u$ and $h:H_c\to H_s$ is to be determined, we see that \eqref{abst} can be satisfied only when
\begin{equation}\label{h}
\begin{aligned}
&Dh(x)\left[ Lx+P_c\left(G(x,x)+\tilde{G}(x,h(x))+G(h(x),h(x)) \right)\right] \\
& \qquad = Lh(x)+P_s \left[ G(x,x)+\tilde{G}(x,h(x))+G(h(x),h(x)) \right]\qquad\forall x\in H_c,
\end{aligned}
\end{equation}
where by abuse of notation we denote by $G$ both the non-linearity in \eqref{abst} and the bilinear form such that $G(u)=G(u,u)$. Note also that we write $\tilde{G}$ for the symmetric part of $G$, $\tilde{G}(x,y)=G(x,y)+G(y,x)$.

We next proceed to compute the non-linear interactions. After some calculations we find that the only non-zero elements arising from the evaluation of this bilinear form on the basis elements of $H_{k,l}$ are
\begin{equation}\label{BHkl}
\begin{aligned}
G(e^{(1)}_{k,l} , e^{(2)}_{k,l} ) &= -\frac{\gamma kl\lambda_{k,l}}{2\lambda_{0,2l}} e^{(2)}_{0,2l},\quad && G(e^{(2)}_{k,l} , e^{(1)}_{k,l} ) = \frac{\gamma kl\lambda_{k,l}}{2\lambda_{0,2l}} e^{(2)}_{0,2l}, \\
G(e^{(1)}_{k,l} , e^{(4)}_{k,l} ) &= -\frac{\gamma kl (2F+\lambda_{k,l})}{2(2F+\lambda_{0,2l})} e^{(4)}_{0,2l},\quad && G(e^{(4)}_{k,l} , e^{(1)}_{k,l} ) =  \frac{\gamma kl \lambda_{k,l} }{2(2F+\lambda_{0,2l})} e^{(4)}_{0,2l},\\
G(e^{(2)}_{k,l} , e^{(3)}_{k,l} ) &= \frac{\gamma kl (2F+\lambda_{k,l})}{2(2F+\lambda_{0,2l})} e^{(4)}_{0,2l},\quad && G(e^{(3)}_{k,l} , e^{(2)}_{k,l} ) = -\frac{\gamma kl \lambda_{k,l} }{2(2F+\lambda_{0,2l})} e^{(4)}_{0,2l},\\
G(e^{(3)}_{k,l} , e^{(4)}_{k,l} ) &= -\frac{\gamma kl\lambda_{k,l}}{2\lambda_{0,2l}} e^{(2)}_{0,2l},\quad && G(e^{(4)}_{k,l} , e^{(3)}_{k,l} ) = \frac{\gamma kl\lambda_{k,l}}{2\lambda_{0,2l}} e^{(2)}_{0,2l}.
\end{aligned}
\end{equation}

The above readily shows that for $x\in H_c\subset H_{k,l}$ one has $G(x,x)\in \text{span }\{ e^{(4)}_{0,2l}\}$. This suggests that we seek for $h$ in the form
\[h(x)= (\tilde{a}_1 x_1^2+\tilde{a}_2 x_1x_2+\tilde{a}_3 x_2^2) e^{(4)}_{0,2l}+O(|x|^3),\]
where the coefficients $\tilde{a}_i$, $i=1,2,3$, are to be determined.

Next we need a more explicit formula for $G(x,x)$ when $x$ is given in the form $x=x_1v^{1,+}_{k,l}+x_2 v^{2,+}_{k,l}$. This is accomplished by making use of \eqref{LHkl} and \eqref{BHkl}, from which  we derive
\begin{equation}\label{BHc}
\begin{aligned}
&G(v^{1,\pm}_{k,l} , v^{1,\pm}_{k,l})= G(v^{2,\pm}_{k,l} , v^{2,\pm}_{k,l}) = -\frac{\gamma^2 k^2lUF\Re(\mu^\pm_{k,l}+\eta_{k,l})}{2F+\lambda_{0,2l}} e^{(4)}_{0,2l},\\
&G(v^{1,\pm}_{k,l},v^{2,\pm}_{k,l}) + G(v^{2,\pm}_{k,l}, v^{1,\pm}_{k,l}) =0.
\end{aligned}
\end{equation}

Then, using \eqref{BHc} and substituting the ansatz for $h$ in \eqref{h}, it is easy to see that $\tilde{a}_1=\tilde{a}_3$ and $\tilde{a}_2=0$. Indeed, after some calculations we obtain
\begin{equation*}
h(x)= -\frac{\gamma^2 k^2lUF\Re(\mu^+_{k,l}+\eta_{k,l})(x_1^2+x_2^2)}{(2F+\lambda_{0,2l})\Re(2\mu^+_{k,l}-\mu^+_{0,2l})} e^{(4)}_{0,2l}+ O(|x|^3).
\end{equation*}

Hence, with $a$ given by \eqref{mnfld}, the reduced equations take the form 
\begin{equation}\label{red0}
\begin{aligned}
\frac{dx_1}{dt}=\alpha^+_{k,l}x_1+\sigma^+_{k,l}x_2 -a(x_1^2+x_2^2) P_1 \left[ x_1\tilde{G}(v^{1,+}_{k,l},e^{(4)}_{0,2l})+x_2\tilde{G}(v^{2,+}_{k,l},e^{(4)}_{0,2l}) \right], \\
\frac{dx_2}{dt}=\alpha^+_{k,l}x_2-\sigma^+_{k,l}x_1 -a(x_1^2+x_2^2) P_2 \left[ x_1\tilde{G}(v^{1,+}_{k,l},e^{(4)}_{0,2l})+x_2\tilde{G}(v^{2,+}_{k,l},e^{(4)}_{0,2l}) \right], 
\end{aligned}
\end{equation}
where $P_i$ is the projection onto the span of $v^{i,+}_{k,l}$.

The terms within brackets in \eqref{red0} require further calculations. To begin with, we need to compute the non-linear interactions between elements in $H_{k,l}$ and $H_{0,2l}$. The results of such straightforward, albeit lengthy, calculations are as follows:
\begin{equation}\label{BHklH02l}
\begin{aligned}
&\tilde{G}(e^{(1)}_{k,l},e^{(4)}_{0,2l}) = \gamma kl(2F+\lambda_{0,2l}-\lambda_{k,l})\left( \frac{1}{2F+\lambda_{k,l}} e^{(4)}_{k,l}-\frac{1}{2F+\lambda_{k,3l}} e^{(4)}_{k,3l}\right), \\
&\tilde{G}(e^{(2)}_{k,l},e^{(4)}_{0,2l}) = -\gamma kl(2F+\lambda_{0,2l}-\lambda_{k,l})\left(\frac{1}{2F+\lambda_{k,l}} e^{(3)}_{k,l}-\frac{1}{2F+\lambda_{k,3l}} e^{(3)}_{k,3l}\right), \\
&\tilde{G}(e^{(3)}_{k,l},e^{(4)}_{0,2l}) = \gamma kl(\lambda_{0,2l}-\lambda_{k,l})\left(\frac{1}{\lambda_{k,l}} e^{(2)}_{k,l}-\frac{1}{\lambda_{k,3l}} e^{(2)}_{k,3l}\right), \\
&\tilde{G}(e^{(4)}_{k,l},e^{(4)}_{0,2l}) = -\gamma kl(\lambda_{0,2l}-\lambda_{k,l})\left(\frac{1}{\lambda_{k,l}} e^{(1)}_{k,l}-\frac{1}{\lambda_{k,3l}} e^{(1)}_{k,3l}\right). 
\end{aligned}
\end{equation}

Now, since we are given $x$ in terms of the basis $\{v^{1,+}_{k,l}, v^{2,+}_{k,l}\}$, we need to substitute \eqref{Hc} in \eqref{BHklH02l}, and we obtain that 
\begin{align}\label{Bv1_e02l}
& \tilde{G}(v^{1,+}_{k,l}, e^{(4)}_{0,2l}) \\
& \quad  \nonumber = \gamma^2 k^2lU(2F+\lambda_{0,2l}-\lambda_{k,l})\left( \frac{1}{2F+\lambda_{k,l}} e^{(4)}_{k,l} - \frac{1}{2F+\lambda_{k,3l}} e^{(4)}_{k,3l}\right) \\
&\qquad  \nonumber +\frac{\gamma kl(\lambda_{0,2l}-\lambda_{k,l})}{\lambda_{k,l}}\left( -\Re(\mu^+_{k,l}+\eta_{k,l}) e^{(1)}_{k,l}+\Im(\mu^+_{k,l}+\eta_{k,l})e^{(2)}_{k,l} \right) \\
& \qquad \nonumber -\frac{\gamma kl(\lambda_{0,2l}-\lambda_{k,l})}{\lambda_{k,3l}}\left( -\Re(\mu^+_{k,l}+\eta_{k,l}) e^{(1)}_{k,3l}+\Im(\mu^+_{k,l}+\eta_{k,l})e^{(2)}_{k,3l} \right),
\\
& \label{Bv2_e02l}
\tilde{G}(v^{2,+}_{k,l}, e^{(4)}_{0,2l}) \\
& \quad \nonumber = -\gamma^2 k^2lU(2F+\lambda_{0,2l}-\lambda_{k,l})\left( \frac{1}{2F+\lambda_{k,l}} e^{(3)}_{k,l} - \frac{1}{2F+\lambda_{k,3l}} e^{(3)}_{k,3l}\right) \\
&\qquad \nonumber  -\frac{\gamma kl(\lambda_{0,2l}-\lambda_{k,l})}{\lambda_{k,l}}\left( \Im(\mu^+_{k,l}+\eta_{k,l}) e^{(1)}_{k,l}+\Re(\mu^+_{k,l}+\eta_{k,l})e^{(2)}_{k,l} \right) \\
&\qquad \nonumber +\frac{\gamma kl(\lambda_{0,2l}-\lambda_{k,l})}{\lambda_{k,3l}}\left( \Im(\mu^+_{k,l}+\eta_{k,l}) e^{(1)}_{k,3l}+\Re(\mu^+_{k,l}+\eta_{k,l})e^{(2)}_{k,3l} \right).
\end{align}

In order to project the above onto $H_{k,l,+}$ we use the identities:
\begin{equation}
\begin{aligned}
-\Re(\mu^+_{k,l}+\eta_{k,l}) e^{(1)}_{k,l}+\Im(\mu^+_{k,l}+\eta_{k,l})e^{(2)}_{k,l} = \Im\left[ (\mu^+_{k,l}+\eta_{k,l})(e^{(2)}_{k,l}-i e^{(1)}_{k,l})\right], \\
\Im(\mu^+_{k,l}+\eta_{k,l}) e^{(1)}_{k,l}+\Re(\mu^+_{k,l}+\eta_{k,l})e^{(2)}_{k,l} = \Re\left[ (\mu^+_{k,l}+\eta_{k,l})(e^{(2)}_{k,l}-i e^{(1)}_{k,l})\right],
\end{aligned}
\end{equation}
and
\begin{equation}\label{w_to_e}
w^\pm_{k,l}=v^{1,\pm}_{k,l}+i v^{2,\pm}_{k,l} = i\gamma kU (e^{(2)}_{k,l}-i e^{(1)}_{k,l}) + (\mu^\pm_{k,l}+\eta_{k,l}) (e^{(4)}_{k,l}-i e^{(3)}_{k,l}).
\end{equation}

Thus, choosing the minus sign in \eqref{w_to_e} and multiplying by $\mu^+_{k,l}+\eta_{k,l}$, we see that
\begin{equation*}
P_c \left[ (\mu^+_{k,l}+\eta_{k,l}))(e^{(2)}_{k,l}-i e^{(1)}_{k,l})\right] = -\frac{(\mu^+_{k,l}+\eta_{k,l})(\mu^-_{k,l}+\eta_{k,l})}{i\gamma kU} P_c\left(e^{(4)}_{k,l}-i e^{(3)}_{k,l} \right).
\end{equation*}

Next, using the fact that
\begin{equation*}
\frac{(\mu^+_{k,l}+\eta_{k,l})(\mu^-_{k,l}+\eta_{k,l})}{i\gamma kU} = \frac{i\gamma kU(2F-\lambda_{k,l})}{2F+\lambda_{k,l}},
\end{equation*}
we obtain
\begin{align*}
& P_c \left[ -\Re(\mu^+_{k,l}+\eta_{k,l}) e^{(1)}_{k,l}+\Im(\mu^+_{k,l}+\eta_{k,l})e^{(2)}_{k,l} \right] = -\frac{\gamma kU(2F-\lambda_{k,l})}{2F+\lambda_{k,l}} P_c e^{(4)}_{k,l}, \\
& P_c \left[ \Im(\mu^+_{k,l}+\eta_{k,l}) e^{(1)}_{k,l}+\Re(\mu^+_{k,l}+\eta_{k,l})e^{(2)}_{k,l} \right] = -\frac{\gamma kU(2F-\lambda_{k,l})}{2F+\lambda_{k,l}} P_c e^{(3)}_{k,l}.
\end{align*}

With the aid of the above formulae are now in position to compute the projections of \eqref{Bv1_e02l} and \eqref{Bv2_e02l} onto $H_{k,l}$, which take the form
\begin{align}
\label{PcBv1_e02l}
& P_c\tilde{G}(v^{1,+}_{k,l}, e^{(4)}_{0,2l}) \\
& \qquad \nonumber = \frac{\gamma^2 k^2lU}{(2F+\lambda_{k,l})\lambda_{k,l}} \left[(2F-\lambda_{k,l})(2\lambda_{k,l}-\lambda_{0,2l})+\lambda_{0,2l}\lambda_{k,l} \right] P_c e^{(4)}_{k,l}, 
\\
& \label{PcBv2_e02l}
P_c\tilde{G}(v^{2,+}_{k,l}, e^{(4)}_{0,2l}) \\
& \qquad \nonumber = -\frac{\gamma^2 k^2lU}{(2F+\lambda_{k,l})\lambda_{k,l}} \left[(2F-\lambda_{k,l})(2\lambda_{k,l}-\lambda_{0,2l})+\lambda_{0,2l}\lambda_{k,l} \right] P_c e^{(3)}_{k,l}.
\end{align}

We need to take an additional projection of these equations to get their components in $H_{k,l,+}$. For this purpose we first subtract the different signs of \eqref{w_to_e} to get 
\begin{equation*}
(\mu^+_{k,l}-\mu^-_{k,l})(e^{(4)}_{k,l}-i e^{(3)}_{k,l}) = (v^{1,+}_{k,l}-v^{1,-}_{k,l}) + i ( v^{2,+}_{k,l}-v^{2,-}_{k,l}), 
\end{equation*}
which implies that
\begin{equation}\label{Pce4e3}
\begin{aligned}
&P_c e^{(4)}_{k,l} =\Re \left(\frac{1}{\mu^+_{k,l}-\mu^-_{k,l}}\right)v^{1,+}_{k,l}-\Im \left(\frac{1}{\mu^+_{k,l}-\mu^-_{k,l}}\right) v^{2,+}_{k,l}, \\
&P_c e^{(3)}_{k,l} = -\Im \left(\frac{1}{\mu^+_{k,l}-\mu^-_{k,l}}\right) v^{1,+}_{k,l}-\Re \left(\frac{1}{\mu^+_{k,l}-\mu^-_{k,l}}\right) v^{2,+}_{k,l}.
\end{aligned}
\end{equation}

Finally, upon substituting \eqref{Pce4e3} in \eqref{PcBv1_e02l} and \eqref{PcBv2_e02l}, we get the projections onto $H_{k,l,+}$ as follows
\begin{equation}\label{final}
\begin{aligned}
&P_1 \tilde{G}(v^{1,+}_{k,l}, e^{(4)}_{k,l})=P_2 \tilde{G}(v^{2,+}_{k,l}, e^{(4)}_{k,l})=a_{kl}  \Re \left(\frac{1}{\mu^+_{k,l}-\mu^-_{k,l}}\right),\\
&P_2 \tilde{G}(v^{1,+}_{k,l}, e^{(4)}_{k,l})=-P_1 \tilde{G}(v^{2,+}_{k,l}, e^{(4)}_{k,l})=-a_{kl}  \Im \left(\frac{1}{\mu^+_{k,l}-\mu^-_{k,l}}\right),
\end{aligned}
\end{equation}
where
\begin{equation*}
a_{kl}=\frac{\gamma^2 k^2lU\left[(2F-\lambda_{k,l})(2\lambda_{k,l}-\lambda_{0,2l})+\lambda_{0,2l}\lambda_{k,l} \right]}{(2F+\lambda_{k,l})\lambda_{k,l}}.
\end{equation*}

The final result follows upon substitution of \eqref{final} in \eqref{red0}.

\end{proof}

\bibliographystyle{siam}

\begin{thebibliography}{10}

\bibitem{cai92}
{\sc M.~Cai}, {\em An analytic study of the baroclinic adjustment in a
  quasigeostrophic two-layer channel model}, Journal of the atmospheric
  sciences, 49 (1992), pp.~1594--1605.

\bibitem{cai87}
{\sc M.~CAI and M.~MAK}, {\em On the multiplicity of equilibria of baroclinic
  waves}, Tellus A, 39 (1987), pp.~116--137.

\bibitem{charney1950}
{\sc J.~G. Charney, R.~Fj{\"o}rtoft, and J.~v. Neumann}, {\em Numerical
  integration of the barotropic vorticity equation}, Tellus, 2 (1950),
  pp.~237--254.

\bibitem{dijkstra2005}
{\sc H.~A. Dijkstra}, {\em Nonlinear physical oceanography: a dynamical systems
  approach to the large scale ocean circulation and El Nino}, vol.~28, Springer
  Science \& Business Media, 2005.

\bibitem{eady49}
{\sc E.~T. Eady}, {\em Long waves and cyclone waves}, Tellus, 1 (1949),
  pp.~33--52.

\bibitem{L12}
{\sc A.~Lunardi}, {\em Analytic semigroups and optimal regularity in parabolic
  problems}, Springer Science \& Business Media, 2012.

\bibitem{b-book}
{\sc T.~Ma and S.~Wang}, {\em Bifurcation theory and applications}, vol.~53 of
  World Scientific Series on Nonlinear Science. Series A: Monographs and
  Treatises, World Scientific Publishing Co. Pte. Ltd., Hackensack, NJ, 2005.

\bibitem{MW09c}
\leavevmode\vrule height 2pt depth -1.6pt width 23pt, {\em Dynamic transition
  theory for thermohaline circulation}, Phys. D, 239 (2010), pp.~167--189.

\bibitem{MW09a}
\leavevmode\vrule height 2pt depth -1.6pt width 23pt, {\em {E}l {N}i\~no
  southern oscillation as sporadic oscillations between metastable states},
  Advances in Atmospheric Sciences, 28:3 (2011), pp.~612--622.

\bibitem{ptd}
\leavevmode\vrule height 2pt depth -1.6pt width 23pt, {\em Phase Transition
  Dynamics}, Springer-Verlag, 2013.

\bibitem{mak1985}
{\sc M.~Mak}, {\em Equilibration in nonlinear baroclinic instability}, Journal
  of the atmospheric sciences, 42 (1985), pp.~2764--2782.

\bibitem{ped1970}
{\sc J.~Pedlosky}, {\em Finite-amplitude baroclinic waves}, Journal of the
  Atmospheric Sciences, 27 (1970), pp.~15--30.

\bibitem{pedlosky2013}
\leavevmode\vrule height 2pt depth -1.6pt width 23pt, {\em Geophysical fluid
  dynamics}, Springer Science \& Business Media, 2013.

\end{thebibliography}

\end{document}